\newtheorem{theorem}{Theorem}[section]
\newtheorem{corollary}[theorem]{Corollary}
\newtheorem{definition}[theorem]{Definition}
\newtheorem{lemma}[theorem]{Lemma}
\newtheorem{remark}{Remark}[section]
\newcommand{\1}{\mathbbm{1}}
\DeclareMathOperator{\Prob}{Pr}
\DeclareMathOperator{\E}{\mathbb{E}}
\def \Rev  {{\sf Rev}}
\def \Reg  {{\sf Reg}}
\title{Multi-armed Bandit Problems with Strategic Arms}
\author{
Mark Braverman \thanks{Department of Computer Science, Princeton University, email: mbraverm@cs.princeton.edu. Research supported in part by an NSF CAREER award (CCF-1149888), NSF CCF-1215990, NSF CCF-1525342, NSF CCF-1412958, a Packard Fellowship in Science and Engineering, and the Simons Collaboration on Algorithms and Geometry.}
\and
Jieming Mao  \thanks{Department of Computer Science, Princeton University, email: jiemingm@cs.princeton.edu.}
\and
Jon Schneider \thanks{Department of Computer Science, Princeton University, email: js44@cs.princeton.edu}
\and 
S. Matthew Weinberg \thanks{Department of Computer Science, Princeton University, email: smweinberg@princeton.edu. Research completed in part while the author was a Research Fellow at the Simons Institute for the Theory of Computing. }
}
\begin{document}
\maketitle

\begin{abstract}
We study a strategic version of the multi-armed bandit problem, where each arm is an individual strategic agent and we, the principal, pull one arm each round. When pulled, the arm receives some private reward $v_a$ and can choose an amount $x_a$ to pass on to the principal (keeping $v_a-x_a$ for itself). All non-pulled arms get reward $0$. Each strategic arm tries to maximize its own utility over the course of $T$ rounds. Our goal is to design an algorithm for the principal incentivizing these arms to pass on as much of their private rewards as possible.

When private rewards are stochastically drawn each round ($v_a^t \leftarrow D_a$), we show that:
\begin{itemize}
\item Algorithms that perform well in the classic adversarial multi-armed bandit setting necessarily perform poorly: For all algorithms that guarantee low regret in an adversarial setting, there exist distributions $D_1,\ldots,D_k$ and an approximate Nash equilibrium for the arms where the principal receives reward $o(T)$. 
\item Still, there exists an algorithm for the principal that induces a game among the arms where each arm has a dominant strategy. When each arm plays its dominant strategy, the principal sees expected reward $\mu'T - o(T)$, where $\mu'$ is the second-largest of the means $\E[D_{a}]$. This algorithm maintains its guarantee if the arms are non-strategic ($x_a = v_a$), and also if there is a mix of strategic and non-strategic arms.

\end{itemize}

\end{abstract}

\section{Introduction}

Classically, algorithms for problems in machine learning assume that their inputs are drawn either stochastically from some fixed distribution or chosen adversarially. In many contexts, these assumptions do a fine job of characterizing the possible behavior of problem inputs. Increasingly, however, these algorithms are being applied to contexts (ad auctions, search engine optimization, credit scoring, etc.) where the quantities being learned are controlled by rational agents with external incentives. To this end, it is important to understand how these algorithms behave in \textit{strategic} settings.

The multi-armed bandit problem is a fundamental decision problem in machine learning that models the trade-off between exploration and exploitation, and is used extensively as a building block in other machine learning algorithms (e.g. reinforcement learning). A learner (who we refer to as the \textit{principal}) is a sequential decision maker who at each time step $t$, must decide which of $k$ arms to `pull'. Pulling this arm bestows a reward (either adversarially or stochastically generated) to the principal, and the principal would like to maximize his overall reward. Known algorithms for this problem guarantee that the principal can do approximately as well as the best individual arm. 

In this paper, we consider a strategic model for the multi-armed bandit problem where each arm is an individual strategic agent and each round one arm is pulled by an agent we refer to as the \textit{principal}. Each round, the pulled arm receives a private reward $v \in [0,1]$ and then decides what amount $x$ of this reward gets passed on to the principal (upon which the principal receives utility $x$ and the arm receives utility $v-x$). Each arm therefore has a natural tradeoff between keeping most of its reward for itself and passing on the reward so as to be chosen more frequently. Our goal is to design mechanisms for the principal which simultaneously learn which arms are valuable while also incentivizing these arms to pass on most of their rewards. 

This model captures a variety of dynamic agency problems, where at each time step the principal must choose to employ one of $K$ agents to perform actions on the principal's behalf, where the agent's cost of performing that action is unknown to the principal (for example, hiring one of $K$ contractors to perform some work, or hiring one of $K$ investors with external information to manage some money). In this sense, this model can be thought of as a multi-agent generalization of the principal-agent problem in contract theory (see Section \ref{relatedwork} for references). The model also captures, for instance, the interaction between consumers (as the principal) and many sellers deciding how steep a discount to offer the consumers - higher prices now lead to immediate revenue, but offering better discounts than your competitors will lead to future sales. In all domains, our model aims to capture settings where the principal has little domain-specific or market-specific knowledge, and can really only process the reward they get for pulling an arm and not any external factors that contributed to that reward.

\subsection{Our results}

\subsubsection{Low-regret algorithms are far from strategyproof}

Many algorithms for the multi-armed bandit problem are designed to work in worst-case settings, where an adversary can adaptively decide the value of each arm pull. Here, algorithms such as EXP3 (\cite{AuerCNS03}) guarantee that the principal receives almost as much as if he had only pulled the best arm. Formally, such algorithms guarantee that the principal experiences at most $O(\sqrt{T})$ regret over $T$ rounds compared to any algorithm that only plays a single arm (when the adversary is oblivious).

Given these worst-case guarantees, one might naively expect low-regret algorithms such as EXP3 to also perform well in our strategic variant. It is important to note, however, that single arm strategies perform dismally in this strategic setting; if the principal only ever selects one arm, the arm has no incentive to pass along any surplus to the principal. In fact, we show that the objectives of minimizing adversarial regret and performing well in this strategic variant are fundamentally at odds.

\begin{theorem}[informal restatement of Theorem \ref{thm:advtacit}]\label{thm:introthm1}
Let $M$ be a low-regret algorithm for the classic multi-armed bandit problem with adversarially chosen values. Then there exists an instance of the strategic multi-armed bandit problem and an $o(T)$-Nash equilibrium for the arms where a principal running $M$ receives at most $o(T)$ revenue. 
\end{theorem}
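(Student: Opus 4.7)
The plan is to fix the low-regret algorithm $M$ (with regret $R(T)=o(T)$), choose the simple $2$-arm instance $D_1=D_2=\mathrm{Bernoulli}(1/2)$, and exhibit a collusive equilibrium sustained by grim-trigger punishments. Define $p_a$ to be the expected fraction of pulls $M$ makes of arm $a$ when it observes all-zero rewards; this is a number determined entirely by $M$. The strategy for each arm $a$ will be: pass $0$ every round while monitoring its running pull count $n_a^t$, and the first time $n_a^t < p_a t - \gamma_t$ (for $\gamma_t=\Theta(\sqrt{t\log T})$), switch permanently to ``punish'' mode and pass its entire realized value $V_a^t$ on every subsequent pull. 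Under this profile both arms pass $0$, so the principal collects revenue $0$, and by a routine concentration argument no arm's trigger fires under baseline except with probability $o(1)$.

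To verify the $o(T)$-Nash property, I will fix an arbitrary deviation $\sigma_i'$ of arm $i$ with average per-pull pass $\bar X$, and split on whether the other arm $j$ triggers during the $T$ rounds. If $j$ never triggers, then $n_j^t\geq p_j t-\gamma_t$ throughout, so $n_i^T \leq p_i T+\gamma_T$ and arm $i$'s utility is at most $(p_i T + \gamma_T)\mu_i$, exceeding its baseline $p_i T\mu_i$ by only $O(\gamma_T)=o(T)$. If $j$ triggers at some time $\tau$, the same constraint holds for $t<\tau$, giving $n_i^{\tau-1}\leq p_i(\tau-1)+\gamma_{\tau-1}$ and bounding the pre-trigger contribution by $p_i T\mu_i + O(\gamma_T)$. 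Post-trigger, since arm $j$ now passes $\mu_j\gg\bar X$, a regret bound applied to the sub-window $[\tau+1,T]$ forces $n_i^{>\tau}\leq O(R(T))=o(T)$, yielding only $o(T)$ additional utility. Summing, any deviation of arm $i$ gives at most the baseline utility plus $o(T)$; the symmetric argument applies to arm $j$.

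The step I expect to be the main obstacle is this post-trigger bound: standard low-regret guarantees are stated for the full horizon, not for arbitrary sub-intervals $[\tau+1,T]$. The cleanest fix is to observe that $M$ enjoys anytime regret, which holds for the standard algorithms (EXP3, UCB, and their doubling-trick variants). For fully general low-regret $M$, the fallback is to use the global regret inequality together with the observation that arm $j$'s counterfactual ``always-pulled'' revenue is exactly $0$ (it would never trigger if pulled every round, and so would continue passing $0$), so that the global comparison against arm $i$'s counterfactual revenue $T\bar X$, combined with the count identity $n_1+n_2=T$, can be unpacked to produce the needed upper bound on $n_i^{>\tau}$. The remaining subtleties---adaptive deviation strategies (which require care in defining counterfactual reward sequences), arms with $p_a=0$ (whose baseline utility is already $0$, so $o(T)$-level deviations trivially satisfy the NE inequality), and the tuning of $\gamma_t$ to simultaneously avoid false positives and guarantee timely detection---are all routine.
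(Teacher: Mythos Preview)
Your grim-trigger construction has a real gap at the step you label routine: the claim that under the all-zero baseline $n_a^t \geq p_a t - \gamma_t$ holds for all $t$ with probability $1-o(1)$. The low-regret property says nothing about how $M$ schedules its pulls when every observed reward is zero, since any schedule has zero regret on that input. Concretely, let $M$ pull arm $1$ deterministically for the first $T^{2/3}$ rounds and run EXP3 thereafter. This has regret $O(T^{2/3})=o(T)$ against any adversary, yet on the all-zero input $p_1\approx p_2\approx 1/2$ while $n_2^t=0$ for every $t\le T^{2/3}$, so arm~$2$'s trigger fires long before $\gamma_t=\Theta(\sqrt{t\log T})$ can absorb the deficit. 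Re-centering the trigger around $\E[n_a^t]$ rather than $p_a t$ does not help either: a randomized $M$ that flips a coin at time $0$ to decide which arm to front-load for $T^{2/3}$ rounds is still low-regret but gives $n_a^{T^{2/3}}$ a bimodal law with no useful concentration. So ``pass $0$ and watch the pull counts'' cannot by itself keep the triggers silent under baseline.

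This obstacle is exactly what separates the tacit model from the explicit one, and the paper's proof supplies the missing idea: the collusive strategy is \emph{mechanism-dependent} and \emph{active}. Each arm computes $p_{i,t}$, the conditional probability that $M$ will pull it next given the history, and offers a small positive amount $\theta=\sqrt{\delta/T}$ (rather than $0$) precisely when it is behind both in pull count and in pull probability. This steering forces any $(\rho,\delta)$-low-regret $M$ toward balanced pull counts; the argument goes through a submartingale analysis of $R_{1,t}+R_{2,t}$ combined with Azuma's inequality, at a total cost of $T\theta=O(\sqrt{T\delta})$ revenue leaked to the principal. Your post-trigger deviation accounting is in roughly the right spirit once balance has been established, though your fallback remark that ``arm $j$'s counterfactual always-pulled revenue is exactly $0$'' misreads the regret definition: regret is measured against the \emph{actual} reward sequence $w_{j,t}$ realized along the play path, not against a hypothetical one in which arm $j$ never triggers.
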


Here we assume the game is played under a \emph{tacit} observational model, meaning that arms can only observe which arms get pulled by the principal, but not how much value they give to the principal. In the \emph{explicit} observational model, where arms can see both which arms get pulled and how much value they pass on, even stronger results hold.

\begin{theorem}[informal restatement of Theorem \ref{thm:advexplicit}]\label{thm:introthm2}
Let $M$ be a low-regret algorithm for the classic multi-armed bandit problem with adversarially chosen values. Then there exists an instance of the strategic multi-armed bandit problem in the explicit observational model along with a $o(T)$-Nash equilibrium for the arms where a principal running $M$ receives zero revenue. 
\end{theorem}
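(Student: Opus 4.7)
The plan is to exhibit a single instance together with an explicit trigger-strategy profile under which every arm passes $0$ on the equilibrium path, so that $M$'s revenue is exactly $0$, and then argue that no arm can gain more than $o(T)$ by any deviation. Concretely, take $k\ge 2$ arms whose value distributions $D_a$ are all deterministically $1$. The cooperative strategy for each arm is: when pulled, pass $x=0$ unless at some earlier round you explicitly observed another arm pass $x'>0$, in which case pass $x=1$ from then on (``flood''). Under this profile the trigger never fires, so every transfer is $0$ and $M$ collects $0$.

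Now fix an arm $a$ and any deviation strategy, and let $t^*$ be the first (random) round at which $a$ is pulled and passes some $x>0$; if no such round exists the deviator earns exactly the cooperative payoff. The history through $t^*-1$ is identical to cooperation since every prior transfer is $0$, so $M$'s pulls and $a$'s utilities on $[1,t^*-1]$ are distributed as in cooperation. From round $t^*+1$ onward every arm $b\ne a$ has explicitly seen $a$'s pass, so $b$'s trigger has fired and any subsequent pull of $b$ delivers reward $1$ to the principal.

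The key step is to invoke the adversarial no-regret guarantee on the realized reward matrix. For each $b\ne a$ the ``would-pass'' sequence $r_b^t$ equals $0$ for $t\le t^*$ and $1$ for $t>t^*$, summing to $T-t^*$, so the guarantee yields $\E[\sum_t r_{I_t}^t]\ge T-t^*-o(T)$. Decomposing $M$'s realized reward as $0$ on $[1,t^*-1]$, $x$ at round $t^*$, and $N_a\bar{x}_a+(T-t^*-N_a)$ on $[t^*+1,T]$ (where $N_a$ counts $a$'s future pulls and $\bar{x}_a$ is its average future pass), this rearranges to $\E[N_a(1-\bar{x}_a)]\le x+o(T)=o(T)$. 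But $N_a(1-\bar{x}_a)$ is exactly $a$'s utility on $[t^*+1,T]$, so $a$'s post-deviation continuation earns at most $o(T)$ in expectation. Combining with the one-shot loss of $x$ at round $t^*$ and canceling the common $[1,t^*-1]$ contribution, the total expected deviation gain is at most $-x+o(T)\le o(T)$.

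The main obstacle is this no-regret invocation: the realized reward sequence is not obliviously chosen but is generated by the arms' history-dependent strategies, i.e.\ by an adaptive adversary. The plan relies on the per-realization form of the adversarial regret bound (``for the realized sequence of would-pass values, $M$'s expected reward is within $o(T)$ of the best single arm in hindsight''), which holds for standard low-regret algorithms such as EXP3 but should be stated and cited carefully in the formal write-up. Granting this, uniformity over $(t^*,x)$ and over $a$'s post-$t^*$ play yields an $o(T)$ bound on every deviation, and the profile is indeed an $o(T)$-Nash equilibrium with principal revenue $0$.
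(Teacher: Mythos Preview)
Your proposal is correct and follows essentially the same approach as the paper's proof of Theorem~\ref{thm:advexplicit}. Both use the identical instance (all arms with deterministic value $1$) and the identical grim-trigger profile (pass $0$ until you explicitly observe a positive pass, then flood with $1$), and both bound the deviator's post-trigger continuation utility by the principal's regret against the flooding arm.

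A few minor remarks on the differences. The paper restricts to two arms and fixes the mechanism's randomness $R$ first, arguing per-realization before averaging; you work with $k\ge 2$ arms and expectations throughout, which is fine. Your accounting via the principal's realized revenue is actually slightly cleaner than the paper's: when you subtract, the first-deviation payment $x$ cancels exactly against its appearance in the regret inequality, yielding a $\delta$ bound rather than the paper's $\delta+1$ (the paper bounds the round-$\tau_R$ term crudely by $1$). Your concern about the adaptive adversary is well-placed in general, but note that the paper's Definition of a $\delta$-low-regret algorithm is explicitly stated against an adaptive adversary (``the expectation is taken over the randomness of $M$ and the adversary''), so the invocation goes through directly without further justification. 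Finally, when you ``cancel the common $[1,t^*-1]$ contribution,'' you should also remark that the cooperative utility on $[t^*,T]$ is nonnegative and may therefore be dropped when upper-bounding the deviation gain; this is implicit in your argument but worth stating.
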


While not immediately apparent from the above claims, these instances where low-regret algorithms fail are far from pathological; in particular, there is a problematic equilibrium for any instance where arm $i$ receives a fixed reward $v_i$ each round it is pulled, as long as the the gap between the largest and second-largest $v_i$ is not too large (roughly 1/\#arms). 

The driving cause behind both results is possible collusion between the arms (similar to collusion that occurs in the setting of repeated auctions, see \cite{SkrzypaczH04}). For example, consider a simple instance of this problem with two strategic arms, where arm 1 always gets private reward 1 if pulled and arm 2 always gets private reward 0.8. In this example, we also assume the principal is using algorithm EXP3. By always reporting some value slightly larger than 0.8, arm 1 can incentivize the principal to almost always pull it in the long run. This gains arm 1 roughly 0.2 utility per round (and arm 2 nothing). On the other hand, if arm 1 and arm 2 never pass along any surplus to the principal, they will likely be played equally often, gaining arm 1 roughly 0.5 utility per round and arm 2 0.4 utility per round. 

To show such a market-sharing strategy works for general low-regret algorithms, much more work needs to be done. The arms must be able to enforce an even split of the principal's pulls (as soon as the principal starts lopsidedly pulling one arm more often than the others, the remaining arms can defect and start reporting their full value whenever pulled). As long as the principal guarantees good performance in the non-strategic adversarial case (achieving $o(T)$ regret), we show that the arms can (at $o(T)$ cost to themselves) cooperate so that they are all played equally often.

\subsubsection{Mechanisms for strategic arms with stochastic values}

We next show that, in certain settings, it is in fact possible for the principal to extract positive values from the arms per round. We consider a setting where each arm $i$'s reward when pulled is drawn independently from some distribution $D_i$ with mean $\mu_i$ (known to arm $i$ but not to the principal). In this case the principal can extract the value of the second-best arm. In the below statement, we are using the term ``truthful mechanism'' quite loosely as shorthand for ``strategy that induces a game among the arms where each arm has a dominant strategy.''

\begin{theorem}[restatement of Corollary \ref{cor:stostra}]\label{thm:introthm3}
Let $\mu'$ be the second largest mean amongst the set of $\mu_i$s. Then there exists a truthful mechanism for the principal that guarantees revenue at least $\mu'T - o(T)$ when arms use their dominant strategies.
\end{theorem}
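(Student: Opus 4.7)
The plan is to adapt the explore-then-commit paradigm augmented with a second-price-style threshold in the exploitation phase. Concretely, the mechanism has two phases. In the \emph{exploration phase} of length $kn$ with $n = \Theta(T^{2/3})$, the principal pulls each arm $n$ times in round-robin, recording $\hat{\mu}_a$, the empirical mean of the amounts that arm $a$ passed. Let $a^\star = \arg\max_a \hat{\mu}_a$ and let $\hat{\mu}' = \max_{a \ne a^\star} \hat{\mu}_a$ be the second-largest empirical mean. In the \emph{exploitation phase} of length $T - kn$, the principal commits in advance to pulling $a^\star$ in every remaining round, with the following credible threat: if at any round the running average of $a^\star$'s exploitation passes falls below $\hat{\mu}' - \varepsilon$, the principal permanently switches to pulling the arm whose empirical mean in exploration was second-highest. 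This threat is what allows the mechanism to extract value in exploitation, since otherwise $a^\star$ would simply pass zero once selected.

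I would then characterize dominant strategies. The exploitation analysis is the easier of the two: conditional on $a^\star$ being the winner, its uniquely optimal strategy is to pass the smallest amount per round that avoids triggering the switch, i.e., pass slightly above $\hat{\mu}'$ whenever $v_a^t \geq \hat{\mu}'$ and pass $v_a^t$ otherwise. By a Hoeffding-type concentration argument over the running averages, this strategy both satisfies the threshold with high probability and delivers a per-round pass of $\hat{\mu}' - o(1)$. Because no other arm is played during exploitation, this is dominant in the exploitation subgame regardless of anyone else's off-path behavior.

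The more delicate part, which I expect to be the main obstacle, is pinning down a dominant exploration strategy. In a naive implementation, an arm could profitably ``shade'' its passes during exploration (saving immediate surplus) while still clearing the winning threshold, so the optimal bid depends on the other arms' empirical means and no dominant strategy exists. To circumvent this, I would structure the mechanism so that each arm's expected utility from any given exploration strategy is decoupled from the other arms'. The natural device is a leave-one-out admission threshold: commit to a rule that sets arm $a$'s admission bar as a function only of the other $k-1$ arms' exploration passes, and only declare $a$ the exploitation winner if $\hat{\mu}_a$ exceeds that bar. Because the exploration horizon $O(T^{2/3})$ is dominated by the exploitation reward swing $\Theta(T)(\mu_a - \hat{\mu}')$, arm $a$'s best response collapses to the binary choice ``clear the bar and collect exploitation profit'' versus ``pass nothing and keep the exploration surplus,'' where the payoff of each branch depends only on $\mu_a$ and the (externally determined) bar. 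A standard Myersonian argument then shows passing $v_a^t$ in every exploration round is dominant for each arm whose $\mu_a$ exceeds the realized bar, and passing $0$ is dominant otherwise.

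Finally, I would assemble the pieces. Under the dominant strategies, concentration ensures $\hat{\mu}_a = \mu_a \pm \tilde{O}(n^{-1/2})$ simultaneously for all arms, so the true highest-mean arm is selected as $a^\star$ and the threshold satisfies $\hat{\mu}' = \mu' \pm o(1)$ with high probability. Total expected revenue is then at least $(T - kn)(\mu' - o(1)) = \mu' T - O(T^{2/3})$, plus a negligible nonnegative contribution from exploration, yielding the claimed $\mu'T - o(T)$.
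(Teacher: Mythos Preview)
Your exploration phase does not yield dominant strategies, and the proposed fix does not close the gap. In a second-price auction the bid is costless: you pay only if you win, and then only the second price. In your mechanism, ``bidding'' means actually passing value to the principal during exploration, which is a sunk cost of $n\hat\mu_a$ regardless of whether arm $a$ wins. Consequently, conditional on any realized bar $b'$, an arm with $\mu_a > b'$ strictly prefers to pass just above $b'$ rather than its full realized value; the savings $n(\mu_a - b')$ can be $\Theta(T^{2/3})$. So even restricting attention to your binary choice, ``pass $v_a^t$'' is dominated by ``pass $b'+\varepsilon$'' whenever the latter is available, and which of your two candidates is better still depends on $b'$. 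The sentence ``passing $v_a^t$ \ldots is dominant for each arm whose $\mu_a$ exceeds the realized bar'' is not a dominant-strategy statement at all: the arm does not observe the bar when it acts, and a rule that conditions on it is not a single strategy. The Myersonian argument you invoke delivers Bayes--Nash incentive compatibility for direct mechanisms with transfers, not dominant strategies in a mechanism where the report \emph{is} the payment.

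The paper sidesteps this with a device your proposal is missing: a proper scoring rule. In the unrestricted-payment model each arm is pulled once and asked for a single number $w_i$; later the principal gives arm $i$ exactly $u+\log w_i$ free pulls (during which the arm keeps everything). The portion of arm $i$'s utility coming from its own report is then $(\mu_i - w_i) + (u+\log w_i)\mu_i$, which is uniquely maximized at $w_i=\mu_i$ \emph{regardless} of the other arms' reports; the remaining exploitation term depends on $w_i$ only through whether $i=i^*$ and is also weakly maximized by truthful reporting. This is what makes the mechanism truthful in the exact dominant-strategy sense the statement requires. Your second-price threshold in the exploitation phase matches the paper's, but without the scoring-rule compensation in the reporting phase the mechanism cannot be made dominant-strategy truthful.
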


The mechanism in Theorem \ref{thm:introthm3} is a slight modification of the second-price auction strategy adapted to the multi-armed bandit setting. The principal begins by asking each arm $i$ for its mean $\mu_i$, where we incentivize arms to answer truthfully by recompensating arms according to a proper scoring rule. For the remainder of the rounds, the principal then asks the arm with the highest mean to give him the second-largest mean worth of value per round. If this arm fails to comply in any round, the principal avoids picking this arm for the remainder of the rounds. (A more detailed description of the mechanism can be seen in Mechanism \ref{mech:sto-stra-known-tacit} in Section \ref{sect:ub}). In addition, we show that the performance of this mechanism is as good as possible in this setting; no mechanism can do better than the second-best arm in the worst case (Lemma \ref{lem:secondbest}).

We further show how to adapt this mechanism in the setting where some arms are strategic and some arms are non-strategic (and our mechanism does not know which arms are which). 

\begin{theorem}[restatement of Theorem \ref{thm:stonstra}]
Let $\mu_s$ be the second largest mean amongst the means of the strategic arms, and let $\mu_{n}$ be the largest mean amongst the means of the non-strategic arms. Then there exists a truthful mechanism for the principal that guarantees (with probability $1-o(1/T)$) revenue at least $\max(\mu_{s}, \mu_{n})T - o(T)$ when arms use their dominant strategies.
\end{theorem}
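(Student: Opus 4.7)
The plan is to graft an empirical-estimation step for non-strategic arms onto the mechanism of Corollary~\ref{cor:stostra}, exploiting the fact that non-strategic arms faithfully reveal their means whenever pulled. The proposed mechanism has two phases. In an initial phase of length $L = o(T)$, each arm is asked to report a claimed mean $r_a \in [0,1]$ (non-strategic arms do not respond, and we set $r_a := 0$ for them), and the principal round-robins through the arms, pulling each roughly $L/k$ times; let $\hat{\mu}_a$ denote the empirical average of the values passed by arm $a$. A proper scoring rule (with total budget $o(T)$, as in Corollary~\ref{cor:stostra}) is attached to $r_a$ to make truthful reporting strictly optimal for a strategic arm. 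After this phase, I would set $m_a := \max(r_a, \hat{\mu}_a)$, let $a^* := \arg\max_a m_a$ and $m^* := \max_{a \neq a^*} m_a$, and for the remaining $T - L$ rounds pull $a^*$ and invoke the compliance/dropping rule of Corollary~\ref{cor:stostra} with target $m^*$.

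First I would verify that each strategic arm has a dominant strategy. Reporting $r_a = \mu_a$ maximizes its scoring-rule payoff irrespective of the other arms' actions. For the in-phase passes $x_t$, note that $r_a$ and $\hat{\mu}_a$ only control whether arm $a$ is selected as $a^*$; they never influence the demand $m^*$ that $a^*$ eventually faces, since $m^*$ is a function of the \emph{other} arms' $m$-values. Hence a strategic arm cannot improve its exploitation-phase utility by padding $\hat{\mu}_a$ through giving up private value in exploration, and it can only lose winner status by underreporting $r_a$. Once in exploitation, the strategic arm's dominant strategy follows directly from Corollary~\ref{cor:stostra}: comply with the demand $m^*$ when $\mu_a \ge m^*$, otherwise accept the drop.

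Next, I would apply a Hoeffding bound with $L = \Theta(k\sqrt{T \log T})$ to obtain that for every non-strategic arm $a$, $|\hat{\mu}_a - \mu_a| = O(T^{-1/4})$ with probability $1 - o(1/T)$ after a union bound over the $k$ arms. On this good event, $m_a = \mu_a \pm o(1)$ for non-strategic arms and $m_a = \mu_a$ exactly for strategic arms playing their dominant strategy, so $a^*$ is (up to the $o(1)$ slack) the globally highest-mean arm and $m^* = \max(\mu_{s,2}, \mu_{n,\max}) \pm o(1)$, where $\mu_{s,2} = \mu_s$ is the second-largest strategic mean and $\mu_{n,\max} = \mu_n$ the largest non-strategic mean. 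A two-case analysis then completes the argument: if $a^*$ is non-strategic, the principal simply pulls it and collects $\mu_{n,\max}$ per round (and $\mu_{n,\max} \ge \mu_{s,2}$ automatically, since $\mu_{n,\max}$ is the overall maximum in this case); if $a^*$ is strategic, then $\mu_{a^*} \ge m^*$ so the arm complies and the principal collects $m^* = \max(\mu_s, \mu_n) \pm o(1)$ per round. Either way the total revenue is $\max(\mu_s, \mu_n) T - o(T)$.

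The main obstacle is ruling out that a strategic arm can \emph{jointly} deviate in $r_a$ and the passed values $\{x_t\}$ to game the $\max$ in $m_a = \max(r_a, \hat{\mu}_a)$: one must in particular rule out strategies like ``pass generously in exploration to steal winner status from a non-strategic arm with higher mean'' and ``pass nothing while reporting inflated $r_a$ to win.'' The first is defused because the demand the winner faces is independent of its own $m_a$, so padding only burns private value for no exploitation benefit; the second is defused exactly as in Corollary~\ref{cor:stostra}, since a winner with $r_a > \mu_a$ will be dropped once its realized values fail to meet $m^*$ in aggregate, and the scoring-rule component alone already penalizes a non-truthful report. Once these two observations are made precise, the dominant-strategy property and the revenue guarantee follow cleanly from Corollary~\ref{cor:stostra} combined with the Hoeffding estimate.
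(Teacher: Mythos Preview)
Your high-level architecture matches the paper's: play each arm many times up front, use concentration for the non-strategic arms, then run a second-price-style exploitation phase. But there is a genuine gap in the compliance step.

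You write that in exploitation you ``invoke the compliance/dropping rule of Corollary~\ref{cor:stostra} with target $m^*$,'' and then in the case analysis say ``if $a^*$ is non-strategic, the principal simply pulls it and collects $\mu_{n,\max}$ per round.'' These two statements are incompatible. The rule in Mechanism~\ref{mech:sto-stra-known-tacit} drops the winner the first time it reports a value \emph{different from} $w'$. A non-strategic winner passes its random realization $v_{a^*,t}$ each round, which almost surely differs from $m^*$, so it is dropped immediately and the principal collects essentially nothing from it. The mechanism cannot branch on whether $a^*$ is strategic, since by assumption it does not know this. The paper's fix (Mechanism~\ref{mech:sto-nstra-known-tacit}) is exactly to replace the exact-value test by a running-average test with a slack $M$: drop $i^*$ only if its \emph{average} over the first $\ge B$ exploitation rounds ever falls below $w'-M$. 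A second Chernoff/union bound then shows a non-strategic winner survives this test with probability $1-o(1/T)$. Once you introduce such a margin, you must also re-derive the strategic arms' dominant strategy (in the paper they now report $\mu_i+M$ rather than $\mu_i$, so that the threshold $w'-M$ lines up with a true mean); your current argument that ``$r_a=\mu_a$ maximizes the scoring-rule payoff'' does not yet interact with any margin.

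A smaller modeling point: in this paper arms communicate only through the value they pass when pulled, so there is no separate channel on which a non-strategic arm can ``fail to respond'' to a request for $r_a$. The paper sidesteps this by letting the first $B$ pulls \emph{be} the report (the average $\bar w_i$), which is well-defined for both arm types; your $m_a=\max(r_a,\hat\mu_a)$ construction would need to be recast in those terms.
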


A detailed description of the modified mechanism can be found in Mechanism \ref{mech:sto-nstra-known-tacit} in Section \ref{sect:ub}.

\subsection{Related work}\label{relatedwork}

The study of classical multi-armed bandit problems was initiated by \cite{Robbins52}, and has since grown into an active area of study. The most relevant results for our paper concern the existence of low-regret bandit algorithms in the adversarial setting, such as the EXP 3 algorithm (\cite{AuerCNS03}), which achieves regret $\tilde{O}(\sqrt{KT})$. Other important results in the classical setting include the upper confidence bound (UCB) algorithm for stochastic bandits (\cite{LaiR85}) and the work of \cite{GittinsJ74} for Markovian bandits. For further details about multi-armed bandit problems, see the survey \cite{Bubeck12}.

One question that arises in the strategic setting (and other adaptive settings for multi-armed bandits) is what the correct notion of regret is; standard notions of regret guarantee little, since the best overall arm may still have a small total reward. \cite{AroraDT12} considered the multi-armed bandit problem with an adaptive adversary  and introduced the quantity of ``policy regret'', which takes the adversary's adaptiveness into account. They showed that any multi-armed bandit algorithm will get $\Omega(T)$ policy regret. This indicates that it is not enough to treat strategic behaviors as an instance of adaptively adversarial behavior; good mechanisms for the strategic multi-armed bandits problem must explicitly take advantage of the rational self-interest of the arms.

Our model bears some similarities to the principal-agent problem of contract theory, where a principal employs an more informed agent to make decisions on behalf of the principal, but where the agent may have incentives misaligned from the principal's interests when it gets private savings (for example \cite{Chassang13}). For more details on principal-agent problem, see the book \cite{LaffontM2002}. Our model can be thought of as a sort of multi-armed version of the principal-agent problem, where the principal has many agents to select from (the arms) and can try to use competition between the agents to align their interests with the principal. 

Our negative results are closely related to results on collusions in repeated auctions. Existing theoretical work \cite{McAfeeM92,AtheyB01,JohnsonR99,Aoyagi03,Aoyagi07,SkrzypaczH04} has shown that collusive schemes exist in repeated auctions in many different settings, e.g., with/without side payments, with/without communication, with finite/infinite typespace. In some settings, efficient collusion can be achieved, i.e., bidders can collude to allocate the good to the bidders who values it the most and leave 0 asymptotically to the seller. Even without side payments and communication, \cite{SkrzypaczH04} showed that tacit collusion exists and can achieve asymptotic efficiency with a large cartel. 

Our truthful mechanism uses a proper scoring rule \cite{Brier50,McCarthy56} implicitly. In general, scoring rules are used to assessing the accuracy of a probabilistic prediction. In our mechanisms, we use a logarithmic scoring rule to incentivize arms to truthfully report their average rewards. 

Our setting is similar to settings considered in a variety of work on dynamic mechanism design, often inspired by online advertising. 
\cite{BergemannV96} considers the problem where a buyer wants to buy a stream of goods with an unknown value from two sellers, and examines Markov perfect equilibria in this model. \cite{Babaioff09,DevanurK09, Babaioff10} study truthful pay-per-click auctions where the auctioneer wishes to design a truthful mechanism that maximizes the social welfare.  \cite{Kremer14,FrazierKKK14} consider the scenario where the principal cannot directly choose which arm to pull, and instead must incentivize a stream of strategic players to prevent them from acting myopically. \cite{AminRS13, AminRS14} consider a setting where a seller repeatedly sells to a buyer with unknown value distribution, but the buyer is more heavily discounted than the seller. \cite{KakadeLN13} develops a general method for finding optimal mechanisms in settings with dynamic private information. \cite{NazerzadehSV08} develops an ex ante efficient mechanism for the Cost-Per-Action charging scheme in online advertising. 

\subsection{Open Problems and Future Directions}

We are far from understanding the complete picture of multi-armed bandit problems in strategic settings. Many questions remain, both in our model and related models. 

One limitation of our negative results is that they only show there exists some `bad' approximate Nash equilibrium for the arms, i.e., one where any low-regret principal receives little revenue. This, however, says nothing about the space of all approximate Nash equilibria. Does there exist a low-regret mechanism for the principal along with an approximate Nash equilibria for the arms where the principal extracts significant utility? An affirmative answer to this question would raise hope for the possibility of a mechanism that can perform well in both the adversarial and strategic setting, whereas a negative answer would strengthen our claim that these two settings are fundamentally at odds.

One limitation of our positive results is that all of the learning takes place at the beginning of the protocol, and is deferred to the arms themselves. As a result, our mechanism fails in cases where the arms' distributions can change over time. Is it possible to design good mechanisms for such settings? Ideally, any good mechanism should learn the arms' values continually throughout the $T$ rounds, but this seems to open up the possibility of collusion between the arms.

Throughout this paper, whenever we consider strategic bandits we assume their rewards are stochastically generated. Can we say anything about strategic bandits with adversarially generated rewards? The issue here seems to be defining what a strategic equilibrium is in this case - arms need some underlying priors to reason about their future expected utility. One possibility is to consider what happens when the arms all play no-regret strategies with respect to some broad class of strategies.

Finally, there are other quantities one may wish to optimize instead of the utility of the principal. For example, is it possible to design an efficient principal, who almost always picks the best arm (even if the arm passes along little to the principal)? Theorem \ref{thm:advtacit} implies the answer is no if the principal also has to be efficient in the adversarial case, but are there other models where we can answer this question affirmatively?

\section{Preliminaries}
\label{sec:prelim}

\subsection{Classic Multi-Armed Bandits}

We begin by reviewing the definition of the classic multi-armed bandits problem and associated quantities.

In the classic multi-armed bandit problem a learner (the \textit{principal}) chooses one of $K$ choices (arms) per round, over $T$ rounds. On round $t$, the principal receives some reward $v_{i,t} \in [0, 1]$ for pulling arm $i$. The values $v_{i,t}$ are either drawn independently from some distribution corresponding to arm $i$ (in the case of \textit{stochastic bandits}) or adaptively chosen by an adversary (in the case of \textit{adversarial bandits}). Unless otherwise specified, we will assume we are in the adversarial setting.

Let $I_t$ denote the arm pulled by the principal at round $t$. The \textit{revenue} of an algorithm $M$ is the random variable 
\[
\Rev(M) = \sum_{t=1}^T v_{I_t,t}
\]
\noindent
and the the \textit{regret} of $M$ is the random variable
\[
\Reg(M) = \max_i  \sum_{t=1}^T v_{i,t} - \Rev(M)
\]

\begin{definition}[$\delta$-Low Regret Algorithm]
Mechanism $M$ is a \textit{$\delta$-low regret algorithm} for the multi-armed bandit problem if 
\[
\E[\Reg(M)] \leq  \delta.
\]

Here the expectation is taken over the randomness of $M$ and the adversary.
\end{definition}

\begin{definition}[($\rho, \delta$)-Low Regret Algorithm]
Mechanism $M$ is a \textit{($\rho, \delta$)-low regret algorithm} for the multi-armed bandit problem if with probability $1 - \rho$, 
\[
\Reg(M) \leq  \delta.
\]
\end{definition}

There exist $O(\sqrt{KT\log K})$-low regret algorithms and $(\rho, O(\sqrt{KT\log(K/\rho)}))$-low regret algorithms for the multi-armed bandit problem; see Section 3.2 of \cite{Bubeck12} for details. 

\subsection{Strategic Multi-Armed Bandits}

The strategic multi-armed bandits problem builds upon the classic multi-armed bandits problem with the notable difference that now arms are strategic agents with the ability to withhold some payment from the principal. Instead of the principal directly receiving a reward $v_{i,t}$ when choosing arm $i$, now arm $i$ receives this reward and passes along some amount $w_{i,t}$ to the principal, gaining the remainder $v_{i,t} - w_{i,t}$ as utility. 

For simplicity, in the strategic setting, we will assume the rewards $v_{i,t}$ are generated stochastically; that is, each round, $v_{i,t}$ is drawn independently from a distribution $D_i$ (where the distributions $D_i$ are known to all arms but not to the principal). While it is possible to pose this problem in the adversarial setting (or other more general settings), this comes at the cost of there being no clear notion of strategic equilibrium for the arms.

This strategic variant comes with two additional modeling assumptions. The first is the informational model of this game; what information does an arm observe when some other arm is pulled. We define two possible observational models:

\begin{enumerate}
\item \textbf{Explicit:} After each round $t$, every arm sees the arm played $I_t$ along with the quantity $w_{I_t,t}$ reported to the principal. 
\item \textbf{Tacit:} After each round $t$, every arm only sees the arm played $I_t$. 
\end{enumerate}

In both cases, only arm $i$ knows the size of the original reward $v_{i,t}$; in particular, the principal also only sees the value $w_{i,t}$ and learns nothing about the amount withheld by the arm. Collusion between arms is generally easier in the explicit observational model than in the tacit observational model.

The second modeling assumption is whether to allow arms to go into debt while paying the principal. In the \textit{restricted payment} model, we impose that $w_{i,t} \leq v_{i,t}$; an arm cannot pass along more than it receives in a given round. In the \textit{unrestricted payment} model, we let $w_{i,t}$ be any value in $[0,1]$. We prove our negative results in the restricted payment model and our positive results in the unrestricted payment model, but our proofs for our negative results work in both models (in particular, it is easier to collude and prove negative results in the unrestricted payment model).

Finally, we proceed to define the set of strategic equilibria for the arms. We assume the mechanism $M$ of the principal is fixed ahead of time and known to the $K$ arms. If each arm $i$ is using a (possibly adaptive) strategy $S_i$, then the expected utility of arm $i$ is defined as

\[
u_i(M, S_1,\dots,S_K) = \E\left[ \sum_{t=1}^T (v_{i,t} - w_{i,t}) \cdot \1_{I_t = i}\right].
\]

An $\varepsilon$-Nash equilibrium for the arms is then defined as follows. 

\begin{definition}[$\varepsilon$-Nash Equilibrium for the arms] 
Strategies $(S_1,...,S_K)$ form an $\varepsilon$-Nash equilibrium for the strategic multi-armed bandit problem if for all $i \in [n]$ and any deviating strategy $S_i'$, 
\[
u_i(S_1, \dots ,S_i, \dots, S_K) \geq u_i(S_1,\dots,S'_i,\dots, S_K) - \varepsilon.
\]
\end{definition}

The goal of the principal is to choose a mechanism $M$ which guarantees large revenue in any $\varepsilon$-Nash Equilibrium for the arms.

In Section \ref{sect:ub}, we will construct mechanisms for the strategic multi-armed bandit problem which are truthful for the arms. We define the related terminology below. 

\begin{definition}[Dominant Strategy]
When the principal uses mechanism $M$, we say $S_i$ is a dominant strategy for arm $i$ if for any deviating strategy $S_i'$ and any strategies for other arms $S_1,..,S_{i-1},S_{i+1},...,S_K$, 
\[
u_i(M, S_1, \dots ,S_i, \dots, S_K) \geq u_i(M, S_1,\dots,S'_i,\dots, S_K).
\]
\end{definition}
\begin{definition}[Truthfulness]
We say that a mechanism $M$ for the principal is truthful, if all arms have some dominant strategies. 
\end{definition}

\section{Negative Results}\label{sect:negative}

In this section, we show that algorithms that achieve low-regret in the multi-armed bandits problem with adversarial values perform poorly in the strategic multi-armed bandits problem. Throughout this section, we will assume we are working in the restricted payment model (i.e., arms can only pass along a value $w_{i,t}$ that is at most $v_{i,t}$), but all proofs work also work in the unrestricted payment model (and in fact are much easier there). 

\subsection{Explicit Observational Model}

We begin by showing that in the explicit observational model, there is an approximate equilibrium for the arms that results in the principal receiving no revenue. Since arms can view other arms' reported values, it is easy to collude in the explicit model; simply defect and pass along the full amount as soon as you observe another arm passing along a positive amount. 

\begin{theorem}
\label{thm:advexplicit}
Let mechanism $M$ be a $\delta$-low regret algorithm for the multi-armed bandit problem. Then in the strategic multi-armed bandit problem under the explicit observational model, there exist distributions $D_i$ and a $(\delta+1)$-Nash equilibrium for the arms where a principal using mechanism $M$ receives zero revenue.
\end{theorem}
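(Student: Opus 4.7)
The plan is to construct a grim-trigger equilibrium. I would take $K=2$ arms with distributions $D_1=D_2$ equal to the point mass at $1$, and have each arm play the strategy: pass $w_{i,t}=0$ as long as no arm has ever been observed to pass a positive amount, and pass $w_{i,t}=1$ (the full value) on every subsequent pull once some arm has triggered. This strategy is well-defined in the explicit observational model because each arm sees all reported $w$-values. In the proposed equilibrium no trigger ever fires, every reported value is $0$, and the principal's revenue is identically zero.

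To check the $(\delta+1)$-Nash property, I would fix any deviating strategy $S_1'$ for arm $1$ and couple the equilibrium and deviation executions with the same randomness for $M$. Let $T_0$ be the first round in which arm $1$ passes a positive amount $w_{1,T_0}>0$ (with $T_0=\infty$ if no such round exists, in which case there is nothing to analyze). Through round $T_0-1$ the principal observes all zeros under the coupling, so the pull distribution and arm $1$'s pre-trigger utility coincide across the two executions; in particular the pull at round $T_0$ itself agrees. From round $T_0+1$ onward arm $2$ pays $1$ per pull in the deviation execution, so arm $2$'s counterfactual reported sequence sums to $T-T_0$. The $\delta$-low-regret bound with arm $2$ as comparator, combined with the identity $N_1^{\mathrm{post}}+N_2^{\mathrm{post}}=T-T_0$ and the fact that arm $2$ pays exactly $N_2^{\mathrm{post}}$ into the principal's revenue post-trigger, implies after a short rearrangement that arm $1$'s expected post-trigger utility is at most $\delta$ larger than its round-$T_0$ payment $w_{1,T_0}$. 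This $w_{1,T_0}$ exactly offsets the round-$T_0$ utility loss relative to the equilibrium (where arm $1$ would have passed $0$), so the total expected deviation gain is bounded by $\delta \le \delta+1$. Symmetry handles arm $2$.

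The main obstacle is justifying use of the low-regret bound against the \emph{adaptive} value sequence that the grim trigger induces: arm $2$'s reported value at round $t$ depends on whether a trigger has fired in rounds before $t$, so the effective adversary is adaptive rather than oblivious. This is acceptable for standard expected-regret guarantees, which hold against adaptive adversaries as long as round-$t$ rewards are committed before round-$t$ actions (satisfied here since each arm's strategy depends only on the past history), but it should be invoked carefully. A secondary subtlety is that the regret bound must be applied unconditionally: naively conditioning on $\{T_0=t_0\}$ would introduce a $\delta/\Pr[T_0=t_0]$ factor that one avoids by bounding the unconditional expected gain directly, noting that the event $\{T_0>T\}$ contributes nothing to the difference in utilities.
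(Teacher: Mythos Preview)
Your proposal is correct and follows the same approach as the paper: the same two-arm point-mass-at-$1$ instance, the same grim-trigger strategy, and the same reduction of the deviation gain to the low-regret guarantee applied with arm $2$ as comparator. Your accounting is in fact slightly tighter than the paper's (the round-$T_0$ payment $w_{1,T_0}$ cancels exactly, yielding a gain of at most $\delta$ rather than $\delta+1$), and the two subtleties you flag---that the induced reward sequence is adaptive, and that the regret bound must be invoked unconditionally rather than after conditioning on $T_0$---are genuine and correctly handled.
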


\begin{proof}
Consider the two-arm setting where $D_1$ and $D_2$ are both deterministic distributions supported entirely on $\{1\}$, so that $v_{i,t} = 1$ for all $i =1,2$ and $t \in [T]$. Consider the following strategy $S^*$ for arm $i$:

\begin{enumerate}
\item Set $w_{i,t} = 0$ if at time $1,...,t-1$, the other arm always reports 0 when pulled. 
\item Set $w_{i,t} = 1$ otherwise. 
\end{enumerate}

We will show that $(S^*,S^*)$ is a $(\delta+1)$-Nash Equilibrium. It suffices to show that arm 1 can get at most $\delta + 1$ more utility by deviating. Consider any deviating strategy $S'$ for arm 1. By convexity, we can assume $S'$ is deterministic (there is some best deterministic deviating strategy). Since mechanism $M$ might be randomized, let $R$ be the randomness used by $M$ and define $M_R$ to be the deterministic mechanism when $M$ uses randomness $R$. Now, consider the case when arm 1 plays strategy $S'$, arm 2 plays strategy $S^*$ and the principal is usings mechanism $M_R$. 

\begin{enumerate}
\item If arm 1 never reports any value larger than 0 when pulled, then $S'$ behaves exactly the same as $S^*$. Therefore, 
\[
u_1(M_R,S',S^*) = u_1(M_R,S^*,S^*).
\]
\item If arm 1 ever reports some value larger than 0 when pulled, let $\tau_R$ be the first time it does so. We know that $S'$ behaves the same as $S^*$ before $\tau_R$. Therefore,
\begin{eqnarray*}
u_1(M_R,S',S^*) &\leq& u_1(M_R,S^*,S^*) + \sum_{t=\tau_R}^T (v_{1,t} - w_{1,t}) \cdot \1_{I_t = 1}\\
&\leq& u_1(M_R,S^*,S^*) + 1+  \sum_{t=\tau_R+1}^T (\max(w_{1,t}, w_{2,t}) - w_{1,t}) \cdot \1_{I_t = 1}\\
\end{eqnarray*}
\end{enumerate}
So in general, we have
\[
u_1(M_R,S',S^*) \leq u_i(M_R,S^*,S^*) + 1+  \sum_{t=\tau_R+1}^T (\max(w_{1,t}, w_{2,t}) - w_{1,t}) \cdot \1_{I_t = 1}.
\]
Therefore
\begin{eqnarray*}
u_1(M,S',S^*) &=& \E_R[u_1(M_R,S',S^*)] \\
 &\leq& \E_R[u_1(M_R,S^*,S^*)] + 1 + \E_R\left[\sum_{t=\tau_R+1}^T (\max(w_{1,t}, w_{2,t}) - w_{1,t}) \cdot \1_{I_t = 1}\right] \\
 &=& u_1(M,S^*,S^*)  + 1 + \E_R\left[\sum_{t=\tau_R+1}^T (\max(w_{1,t}, w_{2,t}) - w_{1,t}) \cdot \1_{I_t = 1}\right]. \\
\end{eqnarray*}

Notice that this expectation is at most the regret of $M$ in the classic multi-armed bandit setting when the adversary sets rewards equal to the values $w_{1,t}$ and $w_{2,t}$ passed on by the arms when they play $(S', S^*)$. Therefore, by our low-regret guarantee on $M$, we have that
\[
\E_R \left[\sum_{t=\tau_R+1}^T (\max(w_{1,t}, w_{2,t}) - w_{1,t}) \cdot \1_{I_t = 1}\right] \leq \delta. 
\]
Thus
\[
u_1(M,S',S^*) \leq u_1(M,S^*,S^*) + 1 +\delta
\]
and this is a $(1+\delta)$-approximate Nash equilibrium. Finally, it is easy to check that the principal receives zero revenue when both arms play according to this equilibrium strategy. 
\end{proof}

\subsection{Tacit Observational Model}

We next show that even in the tacit observational model, where the arms don't see the amounts passed on by other arms, it is still possible for the arms to collude and leave the principal with $o(T)$ revenue. The underlying idea here is that the arms work to try to maintain an equal market share, where each of the $K$ arms are each played approximately $1/K$ of the time. To ensure this happens, arms collude so that arms that aren't as likely to be pulled pass along a tiny amount $\epsilon$ to the principal, whereas arms that have been pulled a lot or are more likely to be pulled pass along $0$; this ends up forcing any low-regret algorithm for the principal to choose all the arms equally often. Interestingly, unlike the collusion strategy in the explicit observational model, this collusion strategy is \emph{mechanism dependent}, as arms need to estimate the probability they will be pulled in the next round. 

We begin by proving this result for the case of two arms, where the proof is slightly simpler. 

\begin{theorem}
\label{thm:advtacit2arms}
Let mechanism $M$ be a $\left(\rho, \delta\right)$-low regret algorithm for the multi-armed bandit problem with two arms, where $\rho \leq T^{-2}$ and $\delta \geq \sqrt{T\log T}$. Then in the strategic multi-armed bandit problem under the tacit observational model, there exist distributions $D_1, D_2$ and an $O(\sqrt{T\delta})$-Nash Equilibrium where a principal using mechanism $M$ gets at most $O(\sqrt{T\delta})$ revenue. 
\end{theorem}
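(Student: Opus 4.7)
The plan is to construct, on the instance $D_1 = D_2 = $ point mass at $1$, a symmetric collusion strategy $S^*$ with two modes, parameterized by $\epsilon = \Theta(\sqrt{\delta/T})$ and a threshold $\tau = \Theta(\sqrt{T\delta})$. In the \emph{cooperate} mode, when arm $i$ is pulled at round $t$ it pays $\epsilon$ if it is trailing or tied ($n_i(t) \le n_{3-i}(t)$ in the public pull history) and pays $0$ otherwise; arm $i$ remains in cooperate mode so long as the pull imbalance satisfies $n_{3-i}(t) - n_i(t) \le \tau$. In the \emph{punish} mode (entered once that imbalance exceeds $\tau$), arm $i$ reports the full value $w_{i,t} = 1$ on every subsequent pull, making itself strictly more attractive than the cooperating arm.

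Under $(S^*,S^*)$, the low-regret property of $M$ keeps the realized pull imbalance below $\tau$, so the punish mode is never triggered and the principal's total revenue stays $O(\epsilon T) = O(\sqrt{T\delta})$; each arm is pulled approximately $T/2$ times, for equilibrium utility roughly $T/2 - O(\sqrt{T\delta})$. For the Nash property it suffices, by symmetry, to bound arm $1$'s gain under any deviation $S'$; its deviation utility equals $n_1(T) - \sum_{t:I_t=1} w_{1,t}$. The crux is that any attempt to inflate $n_1(T)$ substantially beyond $T/2$ pushes the imbalance above $\tau$, triggering arm $2$'s punish mode; thereafter arm $2$ offers the maximum reward $1$ per pull, so by the $(\rho,\delta)$-low-regret property $M$ must pull arm $2$ on almost every subsequent round, capping arm $1$'s total pull count. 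A careful accounting, with the regret bound applied against the benchmark ``always pull arm $2$'' during the punish phase, yields $n_1(T) \le T/2 + O(\sqrt{T\delta})$ in expectation, which suffices.

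The main obstacle is making both halves of this argument rigorous against an arbitrary low-regret $M$. For the equilibrium-dynamics claim (that under $(S^*,S^*)$ the imbalance stays below $\tau$), I plan to apply the regret bound to the adaptive hypothetical reward sequence $v_{2,t} = \epsilon \cdot \mathbf{1}[n_2(t) \le n_1(t)]$ (what arm $2$ would pay if pulled at round $t$): whenever the imbalance would grow, $\sum_t v_{2,t}$ is close to $T\epsilon$ while $M$'s realized revenue from pulling arm $1$ (which is leading) is $0$, giving a regret gap of order $\epsilon \cdot (\text{imbalance})$ that the $\le \delta$ bound caps at $O(\delta/\epsilon) = O(\sqrt{T\delta})$. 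Since this reward sequence is adaptive, I would condition on $M$'s random tape to reduce pointwise to an oblivious-adversary bound, using $\rho \le T^{-2}$ to control rare failure tapes with contribution $O(1)$. For the deviation claim, the punish phase converts excess pulls of arm $1$ into high realized rewards for arm $2$ that the regret bound then forces $M$ to honor, bounding the duration (and hence the opportunity cost) of the triggered imbalance by $O(\sqrt{T\delta})$. All error terms are absorbed by the choices $\epsilon = \sqrt{\delta/T}$, $\tau = \sqrt{T\delta}$, and the hypothesis $\delta \ge \sqrt{T\log T}$.
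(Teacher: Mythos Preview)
Your overall architecture matches the paper's: a symmetric collusion strategy with cooperate/punish modes, parameters $\epsilon=\Theta(\sqrt{\delta/T})$ and threshold $\tau=\Theta(\sqrt{T\delta})$, and the two-step argument that (a) under $(S^*,S^*)$ punish never triggers, and (b) any deviation gains at most $O(\sqrt{T\delta})$. Your treatment of (b) is essentially the paper's. The gap is in (a), and it stems from a real difference between your strategy and the paper's.

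Your cooperate rule has arm $i$ pay $\epsilon$ whenever $n_i \le n_{3-i}$, using only pull counts. The paper's rule is \emph{mechanism-dependent}: arm $i$ pays $\theta$ only when both $c_{i,t-1}<c_{3-i,t-1}$ \emph{and} $p_{i,t}<p_{3-i,t}$, where $p_{i,t}$ is $M$'s pull probability given the history. This is exactly what makes Lemma~3.4 work: the paper shows $R_{1,t}+R_{2,t}$ is a submartingale, and the computation $\E[\Delta(R_1+R_2)]=(p_{1,t}-p_{2,t})(w_{2,t}-w_{1,t})\ge 0$ hinges on the sign of $w_{2,t}-w_{1,t}$ being aligned with the sign of $p_{1,t}-p_{2,t}$. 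Under your rule that alignment can fail (arm~1 leading in counts but $p_{1,t}<p_{2,t}$), so $R_1+R_2$ is not a submartingale, and the Azuma step that yields the lower bound $R_{i,t}\ge -2\theta\sqrt{T\log T}-\delta$ is unavailable.

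That lower bound is the crux. Your heuristic ``regret gap of order $\epsilon\cdot(\text{imbalance})$'' only shows that along an excursion where arm~1 leads, each arm-1 pull increments $R_2$ by $\epsilon$, so the number of such pulls is at most $(\delta-R_{2,\tau_0})/\epsilon$, where $\tau_0$ is the start of the excursion. To conclude the imbalance is $O(\sqrt{T\delta})$ you need $R_{2,\tau_0}\ge -O(\delta)$, i.e.\ a \emph{lower} bound on the regret-to-arm-2, which the low-regret hypothesis does not give you. Under your $(S^*,S^*)$ one can check $R_1+R_2=\epsilon\bigl(|n_1-n_2|-(\text{number of tie rounds})\bigr)$, so a mechanism that spends many rounds near the tie state can drive $R_2$ to order $-\epsilon T$ before an excursion begins, without violating $\max(R_1,R_2)\le\delta$ on that particular reward sequence. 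Ruling such behavior out would require invoking the low-regret guarantee against \emph{other} adversaries, and your sketch does not do this. The paper avoids the issue entirely by building $p_{i,t}$ into the strategy; you should either adopt that mechanism-dependent rule or supply a genuinely new argument bounding $R_{2,\tau_0}$ from below.
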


\begin{proof}
Let $D_1$ and $D_2$ be distributions with means $\mu_1$ and $\mu_2$ respectively, such that $|\mu_1-\mu_2| \leq \max(\mu_1,\mu_2)/2$. Additionally, assume both $D_1$ and $D_2$ are supported on $[\sqrt{\delta/T}, 1]$.  We now describe the equilibrium strategy $S^*$ (the below description is for arm 1; $S^*$ for arm 2 is symmetric):
\begin{enumerate}
\item Set parameters $B = 6\sqrt{T\delta}$ and $\theta = \sqrt{\frac{\delta}{T}}$. 
\item Define $c_{1,t}$ to be the number times arm $1$ is pulled in rounds $1,...,t$. Similarly define $c_{2,t}$ to be the number times arm $2$ is pulled in rounds $1,...,t$.
\item For $t = 1,\dots,T$:
\begin{enumerate}
\item If there exists a $t' \leq t-1$ such that $c_{1,t'} < c_{2,t'} - B$, set $w_{1,t} = v_{1,t}$.
\item If the condition in (a) is not true, let $p_{1,t}$ be the probability that the principal will pick arm 1 in this round conditioned on the history (assuming player $2$ is also playing $S^*$), and let $p_{2,t} = 1-p_{1,t}$. Then:
\begin{enumerate}
\item If $c_{1,t-1} < c_{2,t-1}$ and $p_{1,t}<p_{2,t}$, set $w_{1,t} = \theta$.
\item Otherwise, set $w_{1,t} = 0$.
\end{enumerate}
\end{enumerate}
\end{enumerate}

We will now show that $(S^*,S^*)$ is an $O(\sqrt{T\delta})$-Nash equilibrium. To do this, for any deviating strategy $S'$, we will both lower bound $u_1(M, S^*, S^*)$ and upper bound $u_1(M, S', S^*)$, hence bounding the net utility of deviation.

We begin by proving that $u_1(M, S^*, S^*) \geq \frac{\mu_2 T}{2} - O(\sqrt{T\delta})$. We need the following lemma.
\begin{lemma}
\label{lem:2eq}
If both arms are using strategy $S^*$, then with probability $\left(1-\frac{4}{T}\right)$, $|c_{1,t} -c_{2,t}| \leq B$ for all $t\in[T]$. 
\end{lemma}

\begin{proof}
Assume that both arms are playing the strategy $S^*$ with the modification that they never defect (i.e. condition (a) in the above strategy is removed). This does not change the probability that $|c_{1,t} - c_{2,t}| \leq B$ for all $t \in [T]$. 

Define $R_{1,t} = \sum_{s=1}^{t}w_{1,s} - \sum_{s=1}^{t}w_{I_s, s}$ be the regret the principal experiences for not playing only arm 1. Define $R_{2,t}$ similarly. We will begin by showing that with high probability, these regrets are bounded both above and below. In particular, we will show that with probability at least $1-\frac{2}{T}$, $R_{i,t}$ lies in $[-2\theta\sqrt{T\log T} - \delta, \delta]$ for all $t \in [T]$ and $i \in \{1, 2\}$.

To do this, note that there are two cases where the regrets $R_{1,t}$ and $R_{2,t}$ can possibly change. The first is when $p_{1,t}>p_{2,t}$ and $c_{1,t}>c_{2,t}$. In this case, the arms offer $(w_{1,t}, w_{2,t}) = (0, \theta)$. With probability $p_{1,t}$ the principal chooses arm $1$ and the regrets update to $(R_{1,t+1}, R_{2,t+1}) = (R_{1,t}, R_{2,t} + \theta)$, and with probability $p_{2,t}$ the principal chooses arm $2$ and the regrets update to $(R_{1,t+1}, R_{2, t+1}) = (R_{1, t}-\theta, R_{2,t})$. It follows that $\E[R_{1,t+1}+R_{2,t+1}|R_{1,t}+R_{2,t}] = R_{1,t}+R_{2,t} + (p_{1,t}-p_{2,t})\theta \geq R_{1,t} + R_{2,t}$. 

In the second case, $p_{1,t}<p_{2,t}$ and $c_{2,t}<c_{1,t}$, and a similar calculation shows again that $\E[R_{1,t+1}+R_{2,t+1}|R_{1,t}+R_{2,t}] = R_{1,t}+R_{2,t} + (p_{2,t}-p_{1,t})\theta \geq R_{1,t} + R_{2,t}$. It follows that $R_{1,t} + R_{2,t}$ forms a submartingale.

From the above analysis, it is also clear that $\left|(R_{1,t+1} + R_{2,t+1}) - (R_{1,t} + R_{2,t})\right| \leq \theta$. It follows from Azuma's inequality that, for any fixed $t \in [T]$,

$$\Prob\left[R_{1,t} + R_{2,t} \leq -2\theta\sqrt{T\log T}\right] \leq \frac{1}{T^2}$$

Applying the union bound, with probability at least $1-\frac{1}{T}$, $R_{1,t} + R_{2,t} \geq -2\theta\sqrt{T\log T}$ for all $t \in [T]$. Furthermore, since the principal is using a $\left(T^{-2}, \delta\right)$-low-regret algorithm, it is also true that with probability at least $1-T^{-2}$ (for any fixed $t$) both $R_{1,t}$ and $R_{2,t}$ are at most $\delta$. Applying the union bound again, it is true that $R_{1,t} \leq \delta$ and $R_{2,t} \leq \delta$ for all $t$ with probability at least $1-\frac{1}{T}$. Finally, combining this with the earlier inequality (and applying union bound once more), with probability at least $1 - \frac{2}{T}$, $R_{i,t} \in \left[-2\theta\sqrt{T\log T}-\delta, \delta\right]$, as desired. For the remainder of the proof, condition on this being true.

We next proceed to bound the probability that (for a fixed $t$) $c_{1,t} - c_{2,t} \leq B$. Define the random variable $\tau$ to be the largest value $s \leq t$ such that $c_{1,\tau} - c_{2, \tau} = 0$ -- note that if $c_{1,t} - c_{2,t} \geq 0$, then $c_{1,s} - c_{2,s} \geq 0$ for all $s$ in the range $[\tau, t]$. Additionally let $\Delta_{s}$ denote the $\pm 1$ random variable given by the difference $(c_{1,s} - c_{2,s})-(c_{1,s-1} - c_{2,s-1})$. We can then write

\begin{eqnarray*}
c_{1,t} - c_{2,t} &\leq & \sum_{s = \tau+1}^{t} \Delta_{s} \\
&\leq & \sum_{s=\tau+1}^{t} \Delta_{s}\cdot \1_{p_{1,s}>p_{2,s}} + \sum_{s=\tau+1}^{t} \Delta_{s}\cdot \1_{p_{1,s}\leq p_{2,s}}
\end{eqnarray*}

Here the first summand corresponds to times $s$ where one of the arms offers $\theta$ (and hence the regrets change), and the second summand corresponds to times where both arms offer $0$. Note that since $c_{1,s} \geq c_{2,s}$ in this interval, the regret $R_{2,s}$ increases by $\theta$ whenever $\Delta_{s} = 1$ (i.e., arm $1$ is chosen), and furthermore no choice of arm can decrease $R_{2,s}$ in this interval. Since we know that $R_{2,s}$ lies in the interval $\left[-2\theta\sqrt{T\log T}-\delta, \delta\right]$ for all $s$, this bounds the first sum by

$$\sum_{s=\tau+1}^{t} \Delta_{s}\cdot \1_{p_{1,s}>p_{2,s}} \leq \frac{2\delta + 2\theta\sqrt{T\log T}}{\theta} = \frac{2\delta}{\theta} + 2\sqrt{T\log T}$$

On the other hand, when $p_{1,s} \leq p_{2,s}$, then $\E[\Delta_{s}] = p_{1,s}-p_{2,s} \leq 0$. By Hoeffding's inequality, it then follows that with probability at least $1 - \frac{1}{T^2}$, 

$$\sum_{s=\tau+1}^{t} \Delta_{s}\cdot \1_{p_{1,s}\leq p_{2,s}} \leq 2\sqrt{T\log T}$$

\noindent
Altogether, this shows that with probability at least $1 - \frac{1}{T^2}$,

$$c_{1,t} - c_{2,t} \leq \frac{2\delta}{\theta} + 4\sqrt{T\log T} \leq 6\sqrt{T\delta} = B$$

The above inequality therefore holds for all $t$ with probability at least $1 - \frac{1}{T}$. Likewise, we can show that $c_{2,t} - c_{1,t} \leq B$ also holds for all $t$ with probability at least $1 - \frac{1}{T}$. Since we are conditioned on the regrets $R_{i,t}$ being bounded (which is true with probability at least $\frac{2}{T}$), it follows that $|c_{1,t} - c_{2,t}| \leq B$ for all $t$ with probability at least $1 - \frac{4}{T}$. 

\end{proof}

By Lemma \ref{lem:2eq}, we know that with probability $1-\frac{4}{T}$, $|c_{1,t} -c_{2,t}| \leq B$ throughout the mechanism. In this case, arm 1 never uses step (a), and $c_{1,T} \geq (T-B)/2$. Therefore 

\begin{eqnarray*}
u_1(M, S^*, S^*) &\geq&  \left(1-\frac{4}{T}\right) \cdot (\mu_1-\theta) \cdot (T-B)/2 \\
&\geq & \frac{\mu_1T}{2}\left(1 - \frac{4}{T} - \frac{\theta}{\mu_1} - \frac{B}{T}\right) \\
&=& \frac{\mu_1T}{2} - 2\mu_1 - \frac{\theta T}{2} - \frac{B\mu_1}{2} \\
&\geq & \frac{\mu_1T}{2} - O(\sqrt{T\delta})
\end{eqnarray*}

Now we will show that $u_1(M,S',S^*) \leq \frac{\mu_1T}{2} + O(\sqrt{T\delta})$. Without loss of generality, we can assume $S'$ is deterministic. Let $M_R$ be the deterministic mechanism when $M$'s randomness is fixed to some outcome $R$. Consider the situation when arm $1$ is using strategy $S'$, arm 2 is using strategy $S^*$ and the principal is using mechanism $M_R$. There are two cases:
\begin{enumerate}
\item $c_{1,t} -c_{2,t} \leq B$ is true for all $t\in[T]$. In this case, we have 
\[
u_1(M_R,S',S^*) \leq c_{1,T} \cdot \mu_1 \leq \mu_1(T+B)/2.
\]
\item There exists some $t$ such that $c_{1,t} -c_{2,t} > B$: Let $\tau_R+1$ be the smallest $t$ such that $c_{1,t} -c_{2,t} > B$. We know that $c_{1,\tau_R} -c_{2,\tau_R} \leq B$. Therefore we have
\begin{eqnarray*}
u_1(M_R,S',S^*) &=&\sum_{t=1}^T (\mu_1 - w_{1,t}) \cdot \1_{I_t = 1} \\
&=& \sum_{t=1}^T (\mu_1 - w_{2,t}) \cdot \1_{I_t = 1} + \sum_{t=1}^T (w_{2,t} - w_{1,t}) \cdot \1_{I_t = 1} \\
&\leq& c_{1,\tau_R} \mu_1 + \mu_1 + (T-\tau_R-1) \max(\mu_1-\mu_2,0) + \sum_{t=1}^T (w_{2,t} - w_{1,t}) \cdot \1_{I_t = 1} \\
&\leq& \mu_1(\tau_R+B)/2 + \mu_1 + (T-\tau_R-1) (\mu_1/2) +\sum_{t=1}^T (w_{2,t} - w_{1,t}) \cdot \1_{I_t = 1} \\
&\leq& \mu_1T/2 + \mu_1(B+1)/2 + \sum_{t=1}^T (w_{2,t} - w_{1,t}) \cdot \1_{I_t = 1}. \\
\end{eqnarray*}
\end{enumerate}
In general, we thus have that
\[
u_1(M_R,S',S^*) \leq \mu_1T/2 + \mu_1(B+1)/2 + \max\left(0, \sum_{t=1}^T (w_{2,t} - w_{1,t}) \cdot \1_{I_t = 1}\right). \\
\]
Therefore
\begin{eqnarray*}
u_1(M,S',S^*) &=& \E_R[u_1(M_R,S',S^*)] \\
&\leq& \mu_1T/2 + \mu_1(B+1)/2 + \E_R\left[ \max\left(0, \sum_{t=1}^T (w_{2,t} - w_{1,t}) \cdot \1_{I_t = 1}\right)\right]. \\ 
\end{eqnarray*}
Notice that $\sum_{t=1}^T (w_{2,t} - w_{1,t}) \cdot \1_{I_t = 1}$ is the regret of not playing arm 2 (i.e., $R_2$ in the proof of Lemma \ref{lem:2eq}). Since the mechanism $M$ is $(\rho, \delta)$ low regret, with probability $1-\rho$, this sum is at most $\delta$ (and in the worst case, it is bounded above by $T \mu_2$). We therefore have that:

\begin{eqnarray*}
u_1(M,S',S^*) &\leq& \frac{\mu_1T}{2} + \frac{\mu_1(B+1)}{2} + \delta + \rho T \mu_2 \\
&\leq & \frac{\mu_1T}{2} + O(\sqrt{T\delta})
\end{eqnarray*}

From this and our earlier lower bound on $u_1(M, S^*, S^*)$, it follows that $u_1(M, S',S^*) - u_1(M, S^*, S^*) \leq O(\sqrt{T\delta})$, thus establishing that $(S^*, S^*)$ is an $O(\sqrt{T\delta})$-Nash equilibrium for the arms.

Finally, to bound the revenue of the principal, note that if the arms both play according to $S^*$ and $|c_{1,t} - c_{2,t}| \leq B$ for all $t$ (so they do not defect), the principal gets a maximum of $T\theta = O(\sqrt{T\delta})$ revenue overall. Since (by Lemma \ref{lem:2eq}) this happens with probability at least $1 - \frac{4}{T}$ (and the total amount of revenue the principal is bounded above by $T$), it follows that the total expected revenue of the principal is at most $O(\sqrt{T\delta})$. 
\end{proof}


We now extend this proof to the $K$ arm case, where $K$ can be as large as $T^{1/3}/\log(T)$. 

\begin{theorem}\label{thm:advtacit}
Let mechanism $M$ be a $\left(\rho, \delta\right)$-low regret algorithm for the multi-armed bandit problem with $K$ arms, where $K \leq T^{1/3}/\log(T)$, $\rho \leq T^{-2}$, and $\delta \geq \sqrt{T\log T}$. Then in the strategic multi-armed bandit problem under the tacit observational model, there exist distributions $D_i$ and an $O(\sqrt{KT\delta})$-Nash Equilibrium for the arms where the principal gets at most $O(\sqrt{KT\delta})$ revenue. 
\end{theorem}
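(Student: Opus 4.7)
The plan is to generalize the two-arm construction to $K$ arms, extending both the strategy $S^*$ and the balancing argument. Set parameters $B=\Theta(\sqrt{KT\delta})$ and $\theta=\Theta(\sqrt{K\delta/T})$, and pick distributions $D_1,\ldots,D_K$ supported on $[\theta,1]$ with pairwise close means. Arm $i$'s strategy $S^*$ is: (a) if at any prior round $t'\leq t-1$ some other arm satisfied $c_{j,t'}>c_{i,t'}+B$, defect and report $w_{i,t}=v_{i,t}$; (b) otherwise, letting $p_{j,t}$ denote the probability the principal pulls arm $j$ at round $t$ given the history (assuming every arm plays $S^*$), set $w_{i,t}=\theta$ iff $c_{i,t-1}<\max_j c_{j,t-1}$ and $p_{i,t}<1/K$; (c) else $w_{i,t}=0$. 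The threshold $1/K$ in (b) replaces the ``$p_i<p_j$'' condition of the two-arm strategy, and it is exactly what makes the $K$-arm submartingale argument below go through.

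The central step is the $K$-arm analogue of Lemma~\ref{lem:2eq}: conditioned on all arms playing $S^*$, with probability $1-O(K^2/T)$, $|c_{i,t}-c_{j,t}|\leq B$ for every pair $(i,j)$ and every $t\in[T]$. Let $A_t$ denote the random set of arms offering $\theta$ at round $t$, and consider the aggregate $S_t:=\sum_{i=1}^K R_{i,t}$ of individual regrets. A direct computation of the increment, summing over which arm the principal pulls, yields
\[
\E[\Delta S_t\mid \text{history}] \;=\; \theta\Bigl(|A_t|-K\sum_{a\in A_t}p_{a,t}\Bigr)\;\geq\; 0,
\]
because each $a\in A_t$ has $p_{a,t}<1/K$ by construction. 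Thus $S_t$ is a submartingale with increments bounded by $K\theta$; Azuma's inequality gives $S_T\geq-O(K\theta\sqrt{T\log T})$ with probability $1-O(1/T)$, and a union bound of the $(\rho,\delta)$-low-regret guarantee over the $K$ arms gives $R_{i,T}\leq\delta$ for every $i$ (so $S_T\leq K\delta$) with probability $1-K/T^2$. Together these pin the range of each individual $R_{i,t}$ into $[-O(K\theta\sqrt{T\log T}),\,\delta]$ throughout the run.

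To turn this into a pairwise-count bound, fix $(i,j)$ and let $\tau$ be the largest index $\leq t$ with $c_{i,\tau}=c_{j,\tau}$, so $c_{i,s}\geq c_{j,s}$ on $[\tau,t]$. Decompose $c_{i,t}-c_{j,t}=\sum_{s=\tau+1}^t(\1_{I_s=i}-\1_{I_s=j})$ along rounds where $j\in A_s$ versus $j\notin A_s$. Every round with $j\in A_s$ that contributes $+1$ must have the principal pull an arm paying $0$ while $j$ itself paid $\theta$, increasing $R_j$ by $\theta$; since the range of $R_j$ is already controlled, the number of such rounds is $O(\delta/\theta+K\sqrt{T\log T})$. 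For rounds where $j\notin A_s$, the condition $c_j<\max c$ (which holds in the interior of $[\tau,t]$) forces $p_{j,s}\geq 1/K$; the conditional expectation of the relevant indicator is $p_{i,s}-p_{j,s}$, which is handled by a Hoeffding bound when $p_{i,s}\leq p_{j,s}$, and in the remaining case $p_{i,s}>p_{j,s}\geq 1/K$ forces, via $\sum_k p_k=1$, some third arm $k$ to lie in $A_s$ with $p_k<1/K$, allowing the excess drift to be charged to the regret $R_k$. Summing gives $|c_{i,t}-c_{j,t}|\leq B$, and a union bound over $\binom{K}{2}$ pairs completes the balancing lemma; the hypothesis $K\leq T^{1/3}/\log T$ is what ensures $B$ dominates the $K^2\sqrt{T\log T}$ slack accumulated across this union bound.

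The rest mirrors the two-arm template. Given the balance, each arm is pulled at least $(T-KB)/K$ times when everyone plays $S^*$ and pays at most $\theta$ per pull, so $u_i(M,S^*,\ldots,S^*)\geq \mu_i T/K-O(\sqrt{KT\delta})$. For any deterministic deviating strategy $S'_i$, the same dichotomy as in Theorem~\ref{thm:advtacit2arms} applies: either the deviation maintains balance, in which case the extra gain over $\mu_i T/K$ is bounded by the mechanism's regret against an alternate arm (namely $O(\delta)$ in expectation), or the imbalance grows past $B$ and triggers defection, after which the deviator's extra surplus is again capped by the low-regret guarantee and by the fact that the remaining $\mu_j$'s are within a factor of two of $\mu_i$. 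Hence $u_i(M,S'_i,S^*_{-i})\leq \mu_i T/K+O(\sqrt{KT\delta})$, establishing the $O(\sqrt{KT\delta})$-Nash equilibrium; conditioning on the high-probability non-defection event, every round transfers at most $\theta$ to the principal, yielding expected revenue at most $T\theta+O(1)=O(\sqrt{KT\delta})$. The main obstacle is the pairwise-count bound in the balancing lemma, where the two-arm case's clean monotonicity of $R_j$ on the critical interval is lost because several arms may simultaneously offer $\theta$, forcing the charging argument against a third arm $k$ described above and the stronger restriction on $K$.
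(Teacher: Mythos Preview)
Your architecture is sound, but the pairwise-count step in the balancing lemma has a genuine gap. In the $j\in A_s$ case you assert that a $+1$ increment (i.e.\ $I_s=i$) forces the pulled arm to pay $0$, so that $R_j$ increases by $\theta$; this fails whenever $i\in A_s$ as well, which happens whenever some third arm currently has a higher count than $i$. In that situation $R_j$ is unchanged and the range of $R_j$ no longer bounds the number of such rounds. The $j\notin A_s$ case is likewise not justified: when $p_{i,s}>p_{j,s}\geq 1/K$ you locate a third arm $k$ with $p_{k,s}<1/K$, but (i) this $k$ could be the current max-count arm and hence not in $A_s$, and (ii) even if $k\in A_s$, you have not explained how the positive drift $p_{i,s}-p_{j,s}$ of $c_i-c_j$ is absorbed by an increment of $R_k$, since the principal may pull an arm that also pays $\theta$ (e.g.\ $i$ itself) and leave every regret unchanged. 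The ``charging to $R_k$'' step thus needs an actual accounting identity that you have not supplied, and I do not see how to supply one cheaply for the $0/\theta$ threshold rule.

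The paper sidesteps this difficulty by changing the equilibrium offer: instead of your $0/\theta$ threshold, each non-defecting arm offers the continuous amount $w_{i,t}=\theta(1-p_{i,t})$. This makes the increment of $R_{i,t}-R_{j,t}$ equal to $-\theta(p_{i,t}-p_{j,t})$ \emph{deterministically}, regardless of which arm is pulled; since the increment of $c_{i,t}-c_{j,t}$ has conditional mean $p_{i,t}-p_{j,t}$, the combination $S^{(i,j)}_t:=(c_{i,t}-c_{j,t})+\theta^{-1}(R_{i,t}-R_{j,t})$ is an exact martingale with bounded increments. One Azuma bound on $S^{(i,j)}_t$, together with the already-established range of each $R_{i,t}$, yields $|c_{i,t}-c_{j,t}|\leq B$ directly, with no charging argument at all. (The submartingale for $\sum_i R_{i,t}$ also falls out of Cauchy--Schwarz, $K\sum_i p_{i,t}^2\geq 1$, with no set $A_t$ needed.) One further correction: in the deviation analysis you need $\mu_1-\mu_2\leq \mu_1/K$, not merely ``within a factor of two,'' so that after defection the per-round surplus $\mu_1-\mu_2$ is at most $\mu_1/K$ and the bound $u_1(M,S',S^*_{-1})\leq \mu_1 T/K+O(\sqrt{KT\delta})$ goes through.
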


\begin{proof}[Proof Sketch]
As in the previous proof, let $\mu_i$ denote the mean of the $i$th arm's distribution $D_i$. Without loss of generality, further assume that $\mu_{1} \geq \mu_{2} \geq \dots \geq \mu_{K}$. We will show that as long as $\mu_{1}-\mu_{2} \leq \frac{\mu_1}{K}$, there exists some $O(\sqrt{KT\delta})$-Nash equilibrium for the arms where the principal gets at most $O(\sqrt{KT\delta})$ revenue.

We begin by describing the equilibrium strategy $S^*$ for the arms. Let $c_{i,t}$ denote the number of times arm $i$ has been pulled up to time $t$. As before, set $B = 7\sqrt{KT\delta}$ and set $\theta = \sqrt{\frac{K\delta}{T}}$. The equilibrium strategy for arm $i$ at time $t$ is as follows:

\begin{enumerate}
\item
If at any time $s\leq t$ in the past, there exists an arm $j$ with $c_{j, s} - c_{i, s} \geq B$, defect and offer your full value $w_{i,t} = \mu_{i}$.
\item
Compute the probability $p_{i,t}$, the probability that the principal will pull arm $i$ conditioned on the history so far. 
\item
Offer $w_{i,t} = \theta(1-p_{i,t})$. 
\end{enumerate} 

The remainder of the proof proceeds similarly as the proof of Theorem \ref{thm:advtacit2arms}. The full proof can be found in Appendix \ref{sect:applb}.
\end{proof}

While the theorems above merely claim that a bad set of distributions for the arms exists, note that the proofs above show it is possible to collude in a wide range of instances - in particular, any set of distributions which satisfy $\mu_1 - \mu_2 \leq \mu_1/K$. A natural question is whether we can extend the above results to show that it is possible to collude in any set of distributions. 

One issue with the collusion strategies in the above proofs is that if $\mu_1 - \mu_2 > \mu_1/K$, then arm 1 will have an incentive to defect in any collusive strategy that plays all the arms evenly (arm 1 can report a bit over $\mu_2$ per round, and make $\mu_1 - \mu_2$ every round instead of $\mu_1$ every $K$ rounds). One solution to this is to design a collusive strategy that plays some arms more than others in equilibrium (for example, playing arm $1$ 90\% of the time). We show how to modify our result for two arms to achieve an arbitrary market partition and thus work over a broad set of distributions. 

\begin{theorem}
\label{thm:alldists2arms}
Let mechanism $M$ be a $\left(\rho, \delta\right)$-low regret algorithm for the multi-armed bandit problem with two arms, where $\rho \leq T^{-2}$ and $\delta \geq \sqrt{T\log T}$. Then, in the strategic multi-armed bandit problem under the tacit observational model, for \emph{any} distributions $D_1, D_2$ of values for the arms (supported on $[\sqrt{\delta/T}, 1]$), there exists an $O(\sqrt{T\delta})$-Nash Equilibrium for the arms where a principal using mechanism $M$ gets at most $O(\sqrt{T\delta})$ revenue. 
\end{theorem}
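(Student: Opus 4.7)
The plan is to modify the equilibrium strategy from Theorem \ref{thm:advtacit2arms} so the arms commit to an \emph{uneven} partition of the market, chosen so neither arm gains from deviating regardless of the gap $\mu_1 - \mu_2$. Without loss of generality assume $\mu_1 \geq \mu_2$, and set target pull frequencies $\alpha_1 = 1 - \mu_2/(2\mu_1)$ and $\alpha_2 = \mu_2/(2\mu_1)$. The motivation: after arm $2$ defects to full-value reporting, arm $1$'s best response earns at most $\mu_1 - \mu_2$ per round (it must pay roughly $\mu_2$ to beat arm $2$'s offer), whereas in equilibrium arm $1$ already earns $\alpha_1 \mu_1 = \mu_1 - \mu_2/2$ per round---a slack of $\mu_2/2$ per round which, given the support assumption $\mu_2 \geq \sqrt{\delta/T}$, is large enough to absorb the $O(\sqrt{T\delta})$ error from any deviation via the same telescoping used in Theorem \ref{thm:advtacit2arms}.

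The modified strategy $S^*$ for arm $i$ at time $t$ is: if at any earlier time $s \leq t-1$ some arm $j$ has overshot its target count (equivalently, arm $i$ has been undershot), meaning $c_{j,s} > \alpha_j s + B$ for $B = \Theta(\sqrt{T\delta})$, then defect and set $w_{i,t} = v_{i,t}$; otherwise, offer $w_{i,t} = \theta := \sqrt{\delta/T}$ if both $c_{i,t-1} < \alpha_i(t-1)$ and $p_{i,t} < \alpha_i$ (where $p_{i,t}$ is the probability the principal selects arm $i$ given history, assuming both arms play $S^*$), and offer $0$ otherwise. This makes each arm offer $\theta$ precisely when it is behind its target share and the principal is unlikely to pick it next, the natural generalization of the step-function rule of Theorem \ref{thm:advtacit2arms}.

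The analysis follows Theorem \ref{thm:advtacit2arms} with $\alpha_i$ replacing $1/2$. The analog of Lemma \ref{lem:2eq}---that with probability $1 - O(1/T)$, $|c_{i,t} - \alpha_i t| \leq B$ for all $t$ and $i$---goes through by the same submartingale argument: the principal's per-arm regret $R_{i,t} := \sum_{s \leq t}(w_{i,s} - w_{I_s,s})$ is a submartingale with increments bounded by $\theta$, is at most $\delta$ with high probability by the $(\rho,\delta)$-low-regret guarantee, and is at least $-2\theta\sqrt{T\log T}-\delta$ by Azuma; the resulting two-sided bound on $R_{i,t}$ caps how often the $\theta$-rule fires and hence how far $c_{i,t}$ drifts from $\alpha_i t$. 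Granted this concentration, $u_1(M, S^*, S^*) \geq \alpha_1 T(\mu_1 - \theta) - O(\sqrt{T\delta})$ is immediate. For an upper bound on $u_1(M, S', S^*)$ for an arbitrary deviating $S'$, I split on whether arm $2$ ever defects: if not, $c_{1,T} \leq \alpha_1 T + B$ and $u_1 \leq (\alpha_1 T + B)\mu_1$; if arm $2$ defects at the first bad time $\tau_R$, the same telescoping as in Theorem \ref{thm:advtacit2arms} yields
\[
u_1(M_R, S', S^*) \leq (\alpha_1 \tau_R + B)\mu_1 + \mu_1 + (T-\tau_R-1)(\mu_1 - \mu_2) + \sum_{t} (w_{2,t}-w_{1,t})\1_{I_t=1},
\]
which collapses to $\alpha_1 T \mu_1 + O(\sqrt{T\delta})$ once I apply $\alpha_1 \mu_1 \geq \mu_1 - \mu_2$ and bound the last sum using the low-regret guarantee.

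The main obstacle will be the symmetric argument for arm $2$, whose incentives are qualitatively different. Arm $2$ cannot profitably trigger arm $1$'s defection since $\mu_2 < \mu_1$ makes post-defection competition a losing game, but it could try to grab extra pulls by offering more than $\theta$ without tripping defection. To bound this I would argue that the defection rule still binds: any rapid overshoot by arm $2$ pushes $c_{1,t}$ below $\alpha_1 t - B$ and triggers arm $1$'s defection, after which arm $2$'s added utility is $O(\sqrt{T\delta})$ by the low-regret guarantee (symmetric to the arm-$1$ analysis); any slow overshoot stays within the $B$-window and hence caps arm $2$'s total pulls at $\alpha_2 T + B$, so $u_2(M, S^*, S') \leq (\alpha_2 T+B)\mu_2 + O(\sqrt{T\delta})$. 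Finally, under $(S^*,S^*)$ each arm offers at most $\theta$ per round on the $1 - O(1/T)$ probability good event, for expected principal revenue $O(T\theta) + O(1) = O(\sqrt{T\delta})$.
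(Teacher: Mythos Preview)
Your high-level plan---replace the even split by an uneven target partition so that neither arm gains from triggering defection---is exactly the paper's idea. The paper chooses shares $\mu_i/(\mu_1+\mu_2)$ rather than your $\alpha_i = 1 - \mu_2/(2\mu_1),\ \mu_2/(2\mu_1)$, and tracks the \emph{weighted} quantities $c_{i,t}/\mu_i$ and $p_{i,t}/\mu_i$ rather than $c_{i,t}-\alpha_i t$ and $p_{i,t}$ versus $\alpha_i$; but either partition satisfies the key incentive inequality $\alpha_1\mu_1 \geq \mu_1-\mu_2$, and your telescoping for the deviation bound would go through.

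The real gap is in the concentration step. Your statement that ``$R_{i,t}$ is a submartingale'' is false: even in the symmetric setting of Theorem~\ref{thm:advtacit2arms}, the individual regrets are not submartingales---it is the \emph{sum} $R_{1,t}+R_{2,t}$ that is. Worse, once the targets are asymmetric, the unweighted sum fails too. Under your rule, arm~$1$ offers $\theta$ exactly when $p_{1,t}<\alpha_1$; in that case $(w_{1,t},w_{2,t})=(\theta,0)$ and one computes
\[
\E\bigl[\Delta(R_{1,t}+R_{2,t})\bigr] \;=\; (p_{2,t}-p_{1,t})\,\theta,
\]
which is \emph{negative} whenever $\alpha_2 < p_{2,t} < \tfrac12$---a perfectly possible configuration once $\alpha_1\neq\tfrac12$. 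So ``the same submartingale argument'' does not go through, and without it you have no lower bound on $R_{i,t}$ and hence no control on how far $c_{i,t}$ drifts from $\alpha_i t$.

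The paper's fix is to weight: it shows that $R_{1,t}/\mu_2 + R_{2,t}/\mu_1$ is a submartingale under the rule ``offer $\theta$ when $p_{i,t}/\mu_i < p_{j,t}/\mu_j$,'' the weights being chosen exactly so that the drift vanishes at the threshold. In your parametrization the analogous object is $\alpha_1 R_{1,t}+\alpha_2 R_{2,t}$; checking the drift in the case above gives $\theta(\alpha_1 p_{2,t}-\alpha_2 p_{1,t})$, which is indeed nonnegative when $p_{1,t}<\alpha_1$. The weighting then propagates through the rest of the lemma: the Azuma bound, the two-sided interval for $R_{i,t}$, and ultimately the count bound all pick up factors of $1/\alpha_2$ (this is why the paper's defection threshold is $B=6\sqrt{T\delta}/\mu_2$ in its coordinates), though after converting back to $|c_{i,t}-\alpha_i t|$ these factors cancel and one recovers $O(\sqrt{T\delta})$. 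So your approach is salvageable, but only after inserting the weighted-submartingale argument that is the main new technical content of the paper's proof.
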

\begin{proof}
See Appendix \ref{sect:applb}.
\end{proof}

Unfortunately, it as not as easy to modify the proof of Theorem \ref{thm:advtacit} to prove the same result for $K$ arms. It is an interesting open question whether there exist collusive strategies for $K$ arms that can achieve an arbitrary partition of the market. 

\section{Positive Results}\label{sect:ub}

In this section we will show that, in contrast to the previous results on collusion, there exists a mechanism for the principal that can obtain $\Theta(T)$ revenue from the arms. This mechanism essentially incentivizes each arm to report the mean of its distribution and then runs a second-price auction, asking the arm with the highest mean for the second-highest mean each round. By slightly modifying this mechanism, we can obtain a mechanism that works for a combination of strategic and non-strategic arms.

Throughout this section we will assume we are working in the tacit observational model and the unrestricted payment model.

\subsection{All Strategic Arms with Stochastic Values}
We begin by considering the case when all arms are strategic. 

Define $\mu_i$ as the mean of distribution $D_i$ for $i=1,\dots,K$ and $u = -\log \left( \min_{i:\mu_i \neq 0} \mu_i \right) +1$. We assume throughout that $u = o(T/K)$. 
\begin{algorithm}[ht]

Play each arm once (i.e. play arm 1 in the first round, arm 2 in the second round, etc.). Let $w_{i}$ be the value arm $i$ reports in round $i$.

Let $i^* = \arg\max w_{i}$ (breaking ties lexicographically), and let $w' = \max_{i\neq i^*} w_{i}$.

Tell arm $i^*$ the value of $w'$. Play arm $i^*$ for $R = T - (u+2)K - 1$ rounds. If arm $i^*$ ever reports a value different from $w'$, stop playing it immediately. If arm $i^*$ always gives $w'$, play it for one bonus round (ignoring the value it reports). 
	
For each arm $i$ such that $i\neq i^*$, play it for one round. 

For each arm $i$ satisfying $u + \log (w_{i}) \geq 0$, play it $\lfloor u + \log (w_{i}) \rfloor$ times. Then, with probability $u+ \log (w_{i})- \lfloor u+ \log (w_{i}) \rfloor$, play arm $i$ for one more round.

\SetAlgorithmName{Mechanism}
	\floatname{algorithm}{}
        \caption{Truthful mechanism for strategic arms with known stochastic values in the tacit model}
       \label{mech:sto-stra-known-tacit}
\end{algorithm}

We will first show that the dominant strategy of each arm in this mechanism includes truthfully reporting their mean at the beginning, and then then compute the principal's revenue under this dominant strategy. 

\begin{lemma} 
\label{lem:stra-dom}
The following strategy is the dominant strategy for arm $i$ in Mechanism \ref{mech:sto-stra-known-tacit}:
\begin{enumerate}
\item (line 1 of Mechanism \ref{mech:sto-stra-known-tacit}) Report the mean value $\mu_{i}$ of $D_i$ the first time when arm $i$ is played.
\item (lines 3,4 of Mechanism \ref{mech:sto-stra-known-tacit}) If $i = i^*$, for the $R$ rounds that the principal expects to see reported value $w'$, report the value $w'$. For the bonus round, report 0. If $i \neq i^*$, report 0. 
\item (line 5 of Mechanism \ref{mech:sto-stra-known-tacit}) For all other rounds, report $0$. 
\end{enumerate}
\end{lemma}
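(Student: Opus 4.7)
The plan is to show that the prescribed strategy is a best response for arm $i$ against any fixed strategy profile of the other arms. Because step 1 of Mechanism \ref{mech:sto-stra-known-tacit} deterministically pulls the arms in order and the observational model is tacit, the other arms' first-round reports $\{w_j\}_{j\neq i}$ are statistically independent of any choice by arm $i$; write $M = \max_{j\neq i} w_j$ for the relevant order statistic. Reporting $0$ in step 4, in every step-5 round, and in the bonus round of step 3 is immediate in the unrestricted-payment model, because in each of those rounds the mechanism pulls arm $i$ regardless of the reported value, so $v_{i,t} - w_{i,t}$ is maximized by $w_{i,t}=0$.

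Conditional on $i = i^*$ and an announced $w'$, reporting $w'$ for all $R$ committed rounds and $0$ for the bonus round yields $R(\mu_i - w') + \mu_i$, while defecting after $k < R$ compliant rounds yields at most $k(\mu_i - w') + \mu_i$ (the extra $\mu_i$ comes from reporting $0$ in the round of defection). The gap $(R-k)(\mu_i - w')$ is nonnegative whenever $\mu_i \geq w'$, and when arm $i$'s round-1 report is $\mu_i$, being $i^*$ already forces $\mu_i \geq M = w'$, so compliance is weakly optimal.

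It remains to justify the round-1 report $w_i = \mu_i$. Taking expectations over $M$ and the internal randomness of step 5 (which yields $\max(0, u + \log w_i)$ expected bonus pulls, each contributing $\mu_i$ by the paragraph above), arm $i$'s total expected utility decomposes as
\[
U(w_i) \;=\; 2\mu_i \;+\; A(w_i) \;+\; R\cdot B(w_i),
\]
where $A(w_i) = -w_i + \mu_i\,\max\!\bigl(0,\, u + \log w_i\bigr)$ captures the round-1 payment together with the expected step-5 earnings, and $B(w_i) = \mathbb{E}_M\!\bigl[\max(\mu_i - M,\,0)\cdot \1[w_i > M]\bigr]$ captures the net gain from becoming $i^*$ (using the defect-if-$\mu_i < M$ rule). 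The term $A$ is a logarithmic proper scoring rule: the inequality $\log(w_i/\mu_i) \leq (w_i-\mu_i)/\mu_i$, with equality iff $w_i = \mu_i$, shows that $A$ is uniquely maximized at $w_i = \mu_i$, and the boundary case $w_i < e^{-u}$ is ruled out because the definition $u = -\log(\min_{j:\mu_j\neq 0}\mu_j)+1$ forces $\mu_i \geq e^{-u+1}$. The term $B$ is nondecreasing in $w_i$ and is constant for all $w_i \geq \mu_i$, since the integrand vanishes whenever $M \geq \mu_i$. Hence $w_i = \mu_i$ simultaneously maximizes $A$ and $B$, and therefore $U$.

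The one subtlety I anticipate is the lexicographic tie-breaking at $M = \mu_i$: reporting $w_i = \mu_i + \varepsilon$ might strictly capture that tie and turn arm $i$ into $i^*$, which one could worry strictly improves $B$. The marginal gain is proportional to $R(\mu_i - M) = 0$ at the tie, however, whereas the loss in $A$ is of order $\varepsilon^2$ by strict concavity, so $w_i = \mu_i$ remains weakly dominant and the lemma's phrasing is justified.
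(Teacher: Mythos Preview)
Your proof is correct and follows essentially the same backwards-induction approach as the paper's. The paper handles the round-1 report by a case split on whether $i = i^*$, whereas your additive decomposition $U = 2\mu_i + A(w_i) + R\,B(w_i)$ into a logarithmic-scoring-rule term and an auction term is a slightly cleaner repackaging of the same computation (and your explicit treatment of the lexicographic tie is a detail the paper omits).
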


\begin{proof}
Note that the mechanism is naturally divided into three parts (in the same way the strategy above is divided into three parts): (1) the start, where each arm is played once and reports its mean, (2) the middle, where the principal plays the best arm and extracts the second-best arm's value (and plays each other arm once), and (3) the end, where the principal plays each arm some number of times, effectively paying them off for responding truthfully in step (1). To show the above strategy is dominant, we will proceed by backwards induction, showing that each part of the strategy is the best conditioned on an arbitrary history. 

We start with step (3). It is easy to check that these rounds don't affect how many times the arm is played or not. It follows that it is strictly dominant to just report 0 (and receive your full value for the turn). Note that the reward the arm receives in expectation for this round is $(u + \log(w_{i}))\mu_i$; we will use this later. 

For step (2), assume that $i=i^*$; otherwise, arm $i$ is played only once, and the dominant strategy is to report $0$ and receive expected reward $\mu_{i}$. Depending on what happened in step (1), there are two cases; either $w' \leq \mu_{i}$, or $w' > \mu_{i}$. We will show that if $w' \leq \mu_{i}$, the arm should play $w'$ for the next $R$ rounds (not defecting) and report 0 for the bonus round. If $w' > \mu_{i}$, the arm should play $0$ (defecting immediately).

Note that we can recast step (2) as follows: arm $i$ starts by receiving a reward from his distribution $D_{i}$. For the next $R$ turns, he can pay $w'$ for the privilege of drawing a new reward from his distribution (ending the game immediately if he refuses to pay). If $w' \leq \mu_{i}$, then paying for a reward $w'$ is positive in expectation, whereas if $w' > \mu_{i}$, then paying for a reward is negative in expectation. It follows that the dominant strategy is to continue to report $w'$ if $w' \leq \mu_{i}$ (receiving a total expected reward of $R(\mu_{i}-w') + \mu_{i}$) and to immediately defect and report $0$ if $w' > \mu_{i}$ (receiving a total expected reward of $\mu_{i}$).

Finally, we analyze step (1). We will show that, regardless of the values reported by the other players, it is a dominant strategy for arm $i$ to report its true mean $\mu_{i}$. If arm $i$ reports $w_{i}$, and $i \neq i^*$, then arm $i$ will receive in expectation reward

\begin{equation*}
G = (\mu_{i} - w_{i}) + \mu_{i} + \max(u + \log(w_{i}), 0)\mu_{i}
\end{equation*}

\noindent
If $u + \log(w_i) > 0$, then this is maximized when $w_{i} = \mu_{i}$ and $G= (u + \log(\mu_i) + 1)\mu_i$ (note that by our construction of $u$, $u + \log(\mu_i) \geq 1$). On the other hand, if $u + \log(w_i) \leq 0$, then this is maximized when $w_{i} = 0$ and $G = 2\mu_i$. Since $u+\log(\mu_i)+1 \geq 2$, the overall maximum occurs at $w_i = \mu_i$. 

Similarly, when arm $i$ reports $w_{i}$ and $i = i^*$, then arm $i$ receives in expectation reward

\begin{equation*}
G' = (\mu_{i}-w_{i}) + \min(0, R(\mu_{i}-w')) + \mu_{i} + \max(u + \log(w_{i}), 0)\mu_{i} 
\end{equation*}

\noindent
which is similarly maximized at $w_{i} = \mu_{i}$. Finally, it follows that if $\mu_{i} \leq w'$, $G=G'$, so it is dominant to report $w_{i} = \mu_{i}$. On the other hand, if $\mu_{i} > w'$, then reporting $w_{i} = \mu_{i}$ will ensure $i = i^*$ and so once again it is dominant to report $w_{i} = \mu_{i}$.
\end{proof}

\begin{corollary}
\label{cor:stostra}
Under Mechanism \ref{mech:sto-stra-known-tacit}, the principal will receive revenue at least $\mu'T - o(T)$ when arms use their dominant strategies, where $\mu'$ is the second largest mean in the set of means $\mu_{i}$. 
\end{corollary}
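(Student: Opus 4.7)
The plan is to reduce the corollary to a straightforward accounting exercise by first invoking Lemma~\ref{lem:stra-dom} to pin down each arm's behavior, and then summing the principal's revenue sub-phase by sub-phase under those dominant strategies. Since all of the strategic work has already been done in Lemma~\ref{lem:stra-dom}, no further game-theoretic argument is needed; the one thing to verify is that arm $i^*$ in fact complies (rather than defecting) throughout the middle phase.

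Under the dominant strategy, each arm $i$ reports $w_i = \mu_i$ in the $K$ opening rounds (line~1 of Mechanism~\ref{mech:sto-stra-known-tacit}). It follows that $i^* = \arg\max_i \mu_i$ is the arm of largest mean and $w' = \mu'$ is the second-largest mean. Since $w' = \mu' \leq \mu_{i^*}$, the case analysis inside Lemma~\ref{lem:stra-dom} places us in the non-defection branch, so arm $i^*$ reports $w' = \mu'$ in each of the $R = T - (u+2)K - 1$ middle rounds of line~3.

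Now I would add up the contributions to the principal's revenue across the four sub-phases of play: $\sum_i \mu_i \geq 0$ from the opening line-1 exploration rounds; exactly $R\mu'$ from the $R$ middle line-3 rounds; $0$ from the bonus round for $i^*$ and from the $K-1$ single line-4 rounds for losing arms (all report $0$ under their dominant strategies); and $0$ from the line-5 ``payoff'' rounds (again reporting $0$). Hence the total revenue is at least $R\mu' = (T - (u+2)K - 1)\mu'$. Using the standing assumption $u = o(T/K)$ (which in particular forces $K = o(T)$), we have $(u+2)K = o(T)$, so the total revenue is at least $\mu' T - o(T)$, as claimed. There is no serious obstacle beyond the one-line observation $\mu' \leq \mu_{i^*}$ that triggers the non-defection branch of Lemma~\ref{lem:stra-dom}; everything else is deterministic bookkeeping.
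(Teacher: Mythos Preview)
Your proposal is correct and is exactly the argument the paper has in mind: the corollary is stated without proof immediately after Lemma~\ref{lem:stra-dom}, so the paper is implicitly relying on precisely the bookkeeping you spell out. Your one additional observation---that $u\ge 1$ together with $u=o(T/K)$ forces $K=o(T)$, hence $(u+2)K+1=o(T)$---is the only thing needed beyond reading off $R\mu'$ from the middle phase, and it is sound.
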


\begin{lemma}\label{lem:secondbest}
For any constant $\alpha > 0$, no truthful mechanism can guarantee $(\alpha \mu + (1-\alpha) \mu')T$ revenue in the worst case. Here $\mu$ is the largest value among $\mu_1,...,\mu_K$. And $\mu'$ is the second largest value among $\mu_1,...,\mu_K$.
\end{lemma}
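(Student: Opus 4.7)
My plan is to apply a Myerson-style envelope argument after reducing to a one-shot direct-revelation mechanism via the revelation principle, and then to exhibit a bad instance with one small-mean arm and one zero-mean arm. Since $M$ is truthful, each arm $i$ has a dominant strategy $S_i$ depending on its distribution $D_i$; I would define a direct-revelation mechanism $M'$ in which each arm reports $\hat{D}_i$ at the outset and $M'$ internally simulates $M$ by feeding $S_i$ fresh samples from $\hat{D}_i$ each time arm $i$ is pulled. Truthful reporting is dominant in $M'$, since any deviation $\hat{D}_i \neq D_i$ corresponds to arm $i$ playing $S_i(\hat{D}_i)$ against its true type $D_i$ in the original $M$, which dominance forbids being strictly better than $S_i(D_i)$. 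Let $X_i(\vec{\mu})$ and $P_i(\vec{\mu})$ denote the expected number of pulls and expected payment of arm $i$ under truthful reports; the revenue of $M$ equals $\sum_i P_i(\vec{\mu})$.

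I would then specialize to $K = 2$ with $D_1$ a point mass at $\mu_1 > 0$ and $D_2$ a point mass at $0$. In the unrestricted payment model, arm $2$ sees value $0$ every round and any positive report strictly hurts it, so its unique dominant strategy is to report $0$ always; hence $P_2(\mu_1, 0) = 0$ and the revenue equals $P_1(\mu_1, 0)$. With arm $2$'s report fixed, arm $1$'s type is one-dimensional and Myerson's characterization applies: DSIC forces $\mu_1 \mapsto X_1(\mu_1, 0)$ to be monotone non-decreasing, while the ``always report $0$'' deviation (itself dominant for type $0$) pins down $P_1(0, 0) = 0$. The envelope identity then yields
\[
P_1(\mu_1, 0) \;=\; \mu_1 X_1(\mu_1, 0) - \int_0^{\mu_1} X_1(t, 0)\, dt \;=\; \int_0^{\mu_1} t\, dX_1(t, 0).
\]

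Writing $X(t) := X_1(t, 0) \in [0, T]$, any atom of $dX$ at $t=0$ contributes $0 \cdot (\text{atom}) = 0$ to this Stieltjes integral, so
\[
\frac{P_1(\mu_1, 0)}{\mu_1} \;\leq\; X(\mu_1) - X(0^+),
\]
and the right-hand side tends to $0$ as $\mu_1 \to 0^+$ by monotonicity of $X$ (taking its right-continuous version). Consequently, given any constant $\alpha > 0$, I can pick $\mu_1^\ast > 0$ small enough that $P_1(\mu_1^\ast, 0) < \alpha \mu_1^\ast T$; on this instance $\mu = \mu_1^\ast,\, \mu' = 0$ and $(\alpha\mu + (1-\alpha)\mu')T = \alpha \mu_1^\ast T$, so the revenue falls strictly below the claimed guarantee, establishing the lemma.

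The main obstacle I expect is a careful justification of the revelation principle and the single-parameter envelope formula in this sequential multi-round bandit setting, where $X_i$ and $P_i$ are expectations aggregated over $T$ rounds rather than outcomes of a one-shot allocation. Restricting the bad instance to point-mass distributions makes arm $1$'s type genuinely one-dimensional so that the standard single-parameter mechanism-design machinery applies verbatim to $X_1(\cdot, 0)$ and $P_1(\cdot, 0)$, and the subtle handling of atoms of $dX$ at $0$ is the only nontrivial measure-theoretic point.
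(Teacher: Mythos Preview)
Your proof is correct and follows standard single-parameter mechanism-design reasoning, but it takes a genuinely different route from the paper's argument. You set $\mu' = 0$ and use the Myerson envelope identity to show that $P_1(\mu_1)/\mu_1 \to 0$ as $\mu_1 \to 0^+$, which directly contradicts the target $\alpha \mu_1 T$ for small enough $\mu_1$. The paper instead fixes $\mu' = 1/2$ and considers a discrete family of $L > e^{1/\alpha}$ instances with top means $b_i = 1/2 + i/(2L)$; telescoping the IC constraints between adjacent types together with the revenue guarantee yields the recursion $x_i \geq \alpha + \tfrac{1}{i}\sum_{j<i} x_j$, whence by induction $x_i \geq \alpha H_i > \alpha \ln i$, forcing the allocation fraction $x_L > 1$ and a contradiction.

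Both are Myerson-style arguments at heart. Your approach is cleaner and invokes the standard envelope machinery verbatim, at the cost of relying on the degenerate boundary case $\mu' = 0$; the paper's combinatorial argument is a bit longer but establishes the impossibility already when $\mu'$ is bounded away from zero, which is a somewhat stronger conclusion than the lemma as literally stated. Your handling of the revelation-principle step---restricting to point-mass types in the unrestricted-payment model so that the deviating arm can ignore its observed value and mimic another type exactly---is the right way to make the reduction rigorous in this sequential setting, and the observation that arm~2 with mean $0$ must pay exactly $0$ under any dominant strategy (since every round's utility is $-w_{2,t} \le 0$) cleanly pins down $P_2 = 0$.
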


\begin{proof}
Suppose there exists an truthful mechanism $A$ guarantees $(\alpha \mu + (1-\alpha) \mu')T$ revenue for any distributions. We will show this results in a contradiction. 

We now consider $L>\exp(1/\alpha)$ inputs. The $i$-th input has $\mu = b_i = 1/2 + i/(2L)$ and $\mu' = 1/2$. Among these inputs, one arm (call it arm $j^*$) is always the arm with largest mean and another arm is always the arm with the second largest mean. Other arms have the same input distribution in all the inputs. 

Consider all the arms are using their dominant strategies. For the $i$-th input, let $x_i T$ be the expected number of pulls by $A$ on the arm $k^*$ and $p_i T$ be the expected amount arm $k^*$ gives to the principal. Because the mechanism is truthful, in the $i$-th distribution, arm $k^*$ prefers its dominant strategy than the dominant strategy it uses in some $j$-th distribution ($i \neq j$). In other words, we have for $i \neq j$, 
\[
b_i x_i -p_i \geq b_i x_j -p_j. 
\]
We also have, for all $i$,
\[
b_i x_i -p_i \geq 0.
\]
By using these inequalities , we get for all $i$,
\[
p_i \leq b_i x_i + \sum_{j=1}^{i-1} x_j (b_{j+1} - b_j).
\]
On the other hand, $A$'s revenue in the $i$-th distribution is at most $(p_i + (1-x_i) \mu') T$. Therefore we have, for all $i$,
\[
p_i + (1- x_i) \mu' \geq \alpha \cdot b_i + (1-\alpha) \mu'.
\]
So we get
\[
(1- x_i) \mu' + b_i x_i + \sum_{j=1}^{i-1} x_j (b_{j+1} - b_j) \geq \alpha \cdot b_i + (1-\alpha) \mu'.
\]
It can be simplified as
\[
x_i \geq \alpha + \sum_{j=1}^{i-1} x_j \frac{b_{j+1} -b_j}{b_i-\mu'} =\alpha + \frac{1}{i} \cdot \sum_{j=1}^{i-1} x_j.
\]
By induction we get for all $i$, 
\[
x_i \geq \alpha \sum_{j=1}^i \frac{1}{i} > \alpha \ln(i).
\]
Therefore we have
\[
x_L > \alpha \ln(L) \geq 1.
\]
Here we get a contradiction.
\end{proof}

\begin{remark}
The above algorithm relies on the assumption that arms know their own means $\mu_i$. However, if the arms don't initially know their means, we can instead insert a phase at the beginning that lasts $T^{2/3}$ rounds where we pull each arm $T^{2/3}/K$ times and expect no reward to be passed on. This allows the arms to estimate their rewards, and the following phases can be appropriately adjusted to maintain a solution in $o(T)$-dominant strategies, losing an additional $O(T^{2/3})$ in revenue for the principal, but maintaining the revenue guarantee of $\mu_2 T - o(T)$. It is an interesting question whether a more clever stochastic bandit algorithm can be embedded without destroying dominant strategies, and also whether a solution exists in exact dominant strategies for this model. 
\end{remark}

\subsection{Strategic and Non-strategic Arms with Stochastic Values}
We now consider the case when some arms are strategic and other arms are non-strategic. Importantly, the principal does not know which arms are strategic and which are non-strategic. 

We define $\mu_i$ as the mean of distribution $D_i$ for $i=1,...,K$. Set $B = T^{2/3}$, $M=8T^{-1/3}\ln(KT)$ and $u = -\log \left( \min_{i:\mu_i \neq 0} \mu_i \right) +1+M$. We assume $u = o(\frac{T}{BK})$. 

\begin{algorithm}[ht]
Play each arm $B$ times (i.e. play arm 1 in the first $B$ rounds, arm 2 in the next $B$ rounds, etc.). Let $\bar{w}_{i}$ be the average value arm $i$ reported in its $B$ rounds.

Let $i^* = \arg\max \bar{w}_{i}$ (breaking ties lexicographically), and let $w' = \max_{i\neq i^*} \bar{w}_{i}$.

Tell arm $i^*$ the value of $w'$. Play arm $i^*$ for $R = T - (u+3)BK $ rounds. If arm $i^*$ ever report values with average less than $w'-M$ in any round after $B$ rounds in this step, stop playing it immediately. If arm $i^*$ gives average no less than $w'-M$, play it for $B$ bonus rounds (ignoring the value it reports). 

For each arm $i$ such that $i\neq i^*$, play it for $B$ rounds. 

For each arm $i$ satisfying $u + \log (\bar{w}_{i} - M) \geq 0$, play it $B\lfloor (u + \log (\bar{w}_{i}-M)) \rfloor$ times. Then, with probability $u+ \log (\bar{w}_{i}- M)- \lfloor u+ \log (\bar{w}_{i}-M) \rfloor$, play arm $i$ for $B$ more rounds. 
\SetAlgorithmName{Mechanism}
	\floatname{algorithm}{}
        \caption{Truthful mechanism for strategic/non-strategic arms in the tacit model}
       \label{mech:sto-nstra-known-tacit}
\end{algorithm}

\begin{lemma} 
\label{lem:nstra-dom}
The following strategy is the dominant strategy for arm $i$ in Mechanism \ref{mech:sto-nstra-known-tacit}:
\begin{enumerate}
\item (line 1 of Mechanism \ref{mech:sto-nstra-known-tacit}) For the first $B$ rounds, report a total sum of $(\mu_i +M) B$.
\item (lines 3,4 of Mechanism \ref{mech:sto-nstra-known-tacit}) If $i = i^*$, for the $R$ rounds that the principal expects to see reported value $w'$, report the value $w'-M$. For the $B$ bonus rounds, report 0. If $i \neq i^*$, report 0. 
\item (line 5 of Mechanism \ref{mech:sto-nstra-known-tacit}) For all other rounds, report $0$. 
\end{enumerate}
\end{lemma}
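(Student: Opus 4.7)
The plan is to mirror the backward-induction template used in the proof of Lemma \ref{lem:stra-dom}, adapting each step for the three new features of Mechanism~\ref{mech:sto-nstra-known-tacit}: the block size $B$ in place of single-round pulls, the unrestricted payment model, and the averaging-based termination rule with slack $M$. We fix an arbitrary profile of strategies for the other arms and show that at each of the three phases the prescribed action for arm $i$ is a best response to the continuation.

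For phase 3 (the scoring-rule rounds on line 5), reports cannot influence how many future rounds arm $i$ is played, so $w_{i,t}=0$ is strictly dominant, yielding expected payoff $B\mu_i\cdot\max\bigl(0,u+\log(\bar{w}_i-M)\bigr)$ as a function of the phase-1 empirical average $\bar{w}_i$. For phase 2 (lines 3--4): if $i\neq i^*$ the arm plays one block of $B$ rounds with no downstream consequences, so reporting $0$ is strictly dominant and returns $B\mu_i$. If $i=i^*$, the key observation is that to survive the running-average check through round $k>B$ the arm's cumulative payment must be at least $k(w'-M)$ (this is literally what the check requires), and conversely any schedule satisfying this cumulative bound at every $k>B$ passes all checks; so the arm's best option when $i=i^*$ is to continue through all $R$ rounds (for instance by reporting exactly $w'-M$ each round, which satisfies the check with equality) and collect the $B$ bonus rounds, for net expected reward $R(\mu_i-(w'-M))+B\mu_i$. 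When arm $i$ follows the proposed phase-1 strategy, being $i^*$ forces $w'\leq\mu_i+M$, so $w'-M\leq\mu_i$ and the prescribed phase-2 action of reporting $w'-M$ each round is optimal.

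Phase 1 is the most delicate step. By linearity of expectation, the phase-2 and phase-3 expected payoffs depend on arm $i$'s first-phase reports only through the expected empirical mean $\bar{w}_i=\E[\tfrac{1}{B}\sum_t w_{i,t}]$, reducing the analysis to a one-dimensional optimization over $\bar{w}_i\in[0,1]$. Letting $\tilde{w}=\max_{j\neq i}\bar{w}_j$, the $\bar{w}_i$-dependent component of total expected utility is
\[
-B\bar{w}_i \;+\; B\mu_i\log(\bar{w}_i-M)\cdot\1[\bar{w}_i>M] \;+\; R(\mu_i-\tilde{w}+M)\cdot\1[\bar{w}_i>\tilde{w}].
\]
The scoring-rule term is strictly concave in $\bar{w}_i$ with a unique interior maximum at $\bar{w}_i=\mu_i+M$. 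I would finish with a two-case analysis: (i) if $\mu_i+M>\tilde{w}$, the choice $\bar{w}_i=\mu_i+M$ both attains the scoring-rule maximum \emph{and} makes arm $i$ the winner $i^*$, with positive phase-2 bonus $R(\mu_i-\tilde{w}+M)$; (ii) if $\mu_i+M\leq\tilde{w}$, reporting $\mu_i+M$ leaves $i\neq i^*$, and any deviation upward to become $i^*$ strictly lowers the concave scoring-rule term while offering nonpositive phase-2 bonus (since $\mu_i-\tilde{w}+M\leq 0$). Hence $\bar{w}_i=\mu_i+M$ is the best response in either case, and the boundary $\bar{w}_i\leq M$ (where the scoring term vanishes) is inferior by inspection.

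I expect the main obstacle to be the phase-2 cost lower bound, and in particular ruling out clever reshufflings that exploit the $B$-round grace period where the check is inactive. The cleanest argument is direct: the grace period only enables the arm to play at most $B+1$ rounds for free (if the cumulative payment through round $B+1$ falls below $(B+1)(w'-M)$ the check fails at $B+1$ and the arm stops immediately), and extending play to any round $k>B+1$ requires $\sum_{s\leq k}w_{i,s}\geq k(w'-M)$. Combined with the observation that the one-dimensional reduction in phase 1 depends only on the expected mean of reports, the remainder of the argument is a direct block-level analogue of the proof of Lemma~\ref{lem:stra-dom}.
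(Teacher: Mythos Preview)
Your proposal is correct and follows essentially the same backward-induction template as the paper's proof: analyze phase~3 (report $0$), phase~2 (compare $w'-M$ to $\mu_i$ when $i=i^*$), and phase~1 (optimize the concave scoring-rule term $-B\bar{w}_i + B\mu_i\max(0,u+\log(\bar{w}_i-M))$ to find $\bar{w}_i=\mu_i+M$). You are more explicit than the paper about the $B$-round grace period and the reduction of phase~1 to a one-dimensional problem in $\bar{w}_i$; conversely, the paper is slightly cleaner in phase~2 by separately treating the off-path case $w'-M>\mu_i$ (defect immediately) before returning to phase~1, whereas you fold this into the ``nonpositive bonus'' observation in your case~(ii).
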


\begin{proof}
Similarly as the proof of Lemma \ref{lem:stra-dom}, the mechanism is divided into three parts: (1) the start, where each arm is played $B$ times and reports its mean, (2) the middle, where the principal plays the best arm and extracts the second-best arm's value (and plays each other arm $B$ times), and (3) the end, where the principal plays each arm some number of times, effectively paying them off for responding truthfully in step (1). To show the above strategy is dominant, we will proceed by backwards induction, showing that each part of the strategy is the best conditioned on an arbitrary history. 

For step (3), similarly as the proof of Lemma \ref{lem:stra-dom}, it is strictly dominant for the arm to report 0. The reward the arm receives in expectation for this step is $(u + \log(\bar{w}_{i}-M))\mu_iB$.

For step (2), assume that $i=i^*$; otherwise, arm $i$ is played $B$ times, and the dominant strategy is to report $0$ and receive expected reward $\mu_{i}B$. Depending on what happened in step (1), there are two cases; either $w' -M\leq \mu_{i}$, or $w' -M> \mu_{i}$. Similarly as the proof of Lemma \ref{lem:stra-dom}, we know that if $w' -M\leq \mu_{i}$, the arm should play $w'-M$ for the next $R$ rounds (not defecting) and report 0 for $B$ bonus rounds. If $w' -M> \mu_{i}$, the arm should play $0$ (defecting immediately).

For step (1), similar as the proof of Lemma \ref{lem:stra-dom}, the expected reward of arm $i$ is either

\begin{equation*}
G = (\mu_{i} -\bar{w}_{i})B + B\mu_{i} + \max(u + \log(\bar{w}_{i}-M), 0)B\mu_{i}
\end{equation*}
or
\begin{equation*}
G' = \min(0, R(\mu_{i}-w'+M)) + (\mu_{i} -\bar{w}_{i})B + B\mu_{i} + \max(u + \log(\bar{w}_{i}-M), 0)B\mu_{i}
\end{equation*}
Using the same argument as the proof of Lemma \ref{lem:stra-dom}, we know arm $i$'s dominant strategy is to make $\bar{w}_{i} = \mu_i + M$. 
\end{proof}

\begin{theorem}
\label{thm:stonstra}
If all the strategic arms use their dominant strategies in Lemma \ref{lem:nstra-dom}, then the principal will get at least $\max(u_s,u_n)T - o(T)$ with probability $1-o(1/T)$. Here $u_s$ is the second largest mean of the strategic arms and $u_n$ is the largest mean of the non-strategic arms. 
\end{theorem}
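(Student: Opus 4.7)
The plan is to condition on a high-probability concentration event $\mathcal{E}$ under which every empirical quantity used by Mechanism \ref{mech:sto-nstra-known-tacit} is close to the true mean, then case-split on whether the selected arm $i^*$ is strategic or non-strategic. In each case I will show that the middle-phase revenue alone is already at least $\max(u_s,u_n)T - o(T)$.

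First I would define $\mathcal{E}$ as the intersection of two concentration statements: (a) every non-strategic arm $j$ has phase-1 empirical average $\bar{w}_j$ within $M/4$ of $\mu_j$; and (b) for every non-strategic arm $j$ and every prefix length $b \in [B,R]$, the running average $\tfrac{1}{b}\sum_{t\leq b} v_{j,t}$ of its middle-phase draws lies within $M/4$ of $\mu_j$. A single Hoeffding event fails with probability at most $2\exp(-BM^2/8)$; the choices $B=T^{2/3}$ and $M=8T^{-1/3}\ln(KT)$ make this bound $\le 2(KT)^{-8}$, and a union bound over the $O(KT)$ events gives $\Pr[\mathcal{E}^c]=o(1/T)$. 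Under the dominant strategies of Lemma \ref{lem:nstra-dom}, each strategic arm $i$ reports exactly $(\mu_i+M)B$ in phase 1, so $\bar{w}_i=\mu_i+M$ deterministically, while on $\mathcal{E}$ every non-strategic arm $j$ satisfies $\bar{w}_j\in[\mu_j-M/4,\mu_j+M/4]$.

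In \textbf{Case A}, where $i^*$ is strategic, $i^*$ must be the strategic arm of largest mean, and the runner-up report $w'$ satisfies both $w'\ge u_s+M$ (from the second-top strategic arm) and $w'\ge u_n-M/4$ (from the top non-strategic arm, using $\mathcal{E}$), so $w'-M\ge \max(u_s,u_n)-5M/4$. Since $w'\le \bar{w}_{i^*}=\mu_{i^*}+M$, we have $w'-M\le \mu_{i^*}$, so Lemma \ref{lem:nstra-dom} instructs $i^*$ to report exactly $w'-M$ in each of the $R$ middle rounds without defecting. The middle-phase revenue is thus $R(w'-M)\ge R\max(u_s,u_n)-5MR/4$, and combining $R=T-(u+3)BK=T-o(T)$ with $MT=O(T^{2/3}\log(KT))=o(T)$ yields $\max(u_s,u_n)T-o(T)$. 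In \textbf{Case B}, where $i^*$ is non-strategic, the inequality $\bar{w}_{i^*}\ge \bar{w}_j=\mu_j+M$ for every strategic $j$ combined with $\bar{w}_{i^*}\le \mu_{i^*}+M/4$ forces $\mu_{i^*}\ge \mu_j+3M/4$ for every strategic $j$, so in particular $u_s<\mu_{i^*}\le u_n$ and $\max(u_s,u_n)=u_n$. Because $w'\le \bar{w}_{i^*}\le \mu_{i^*}+M/4$, statement (b) of $\mathcal{E}$ gives $\tfrac{1}{b}\sum_{t\le b}v_{i^*,t}\ge \mu_{i^*}-M/4>w'-M$ for every $b\ge B$, so the running-average check never stops the non-strategic $i^*$. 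The middle-phase revenue is then $\sum_{t=1}^{R}v_{i^*,t}\ge R(\mu_{i^*}-M/4)\ge u_nT - o(T)=\max(u_s,u_n)T - o(T)$.

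Combining the two cases, on $\mathcal{E}$ the principal collects at least $\max(u_s,u_n)T-o(T)$, and since $\Pr[\mathcal{E}]\ge 1-o(1/T)$ the theorem follows. The main obstacle is calibrating $B$ and $M$ so that three competing requirements hold simultaneously: (i) $BM^2=\omega(\log(KT))$, to keep each Hoeffding event at failure probability $o(1/(KT))$; (ii) $MT=o(T)$, so the $O(MT)$ slack in the revenue bounds is absorbed; and (iii) $KB=o(T)$ together with $uBK=o(T)$, so that $R=T-o(T)$. The choices $B=T^{2/3}$, $M=\Theta(T^{-1/3}\log(KT))$ hit exactly this sweet spot, and the most delicate step is the uniform-in-$b$ running-average concentration used to certify in Case B that the non-strategic $i^*$ never fails the threshold test.
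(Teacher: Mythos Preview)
Your argument follows the same two-case structure as the paper's proof (concentrate the non-strategic empirical averages, then split on whether $i^*$ is strategic). The paper uses margin $M/2$ rather than $M/4$ and re-invokes Chernoff separately in its Case 2 instead of bundling all concentration into a single event $\mathcal{E}$ up front, but these differences are cosmetic; the skeleton is identical.

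There is, however, one small logical gap in your Case B. You establish $u_s<\mu_{i^*}\le u_n$, deduce $\max(u_s,u_n)=u_n$, and then write $R(\mu_{i^*}-M/4)\ge u_nT-o(T)$. But this last inequality needs a \emph{lower} bound $\mu_{i^*}\ge u_n-O(M)$, and the only thing you have proved about $\mu_{i^*}$ relative to $u_n$ is the upper bound $\mu_{i^*}\le u_n$, which points the wrong way. The missing step is immediate from your own event $\mathcal{E}$: if $j^\star$ denotes the non-strategic arm with mean $u_n$, then $\bar{w}_{i^*}\ge \bar{w}_{j^\star}\ge u_n-M/4$ while $\bar{w}_{i^*}\le \mu_{i^*}+M/4$, so $\mu_{i^*}\ge u_n-M/2$. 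With this one line inserted you get $R(\mu_{i^*}-M/4)\ge R(u_n-3M/4)=u_nT-o(T)$ and the proof closes. (The paper handles this by bounding $\mu_{i^*}$ below by both $u_s+M/2$ and $u_n-M$ directly, rather than first arguing $\max(u_s,u_n)=u_n$.)
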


\begin{proof}
We prove that with high probability non-strategic arms' reported values don't deviate too much from their means. 

For each non-strategic arm $i$,  by Chernoff bound,
\[
\Pr[|\bar{w}_i - \mu_i| \geq M/ 2] \leq 2\exp(-(M/2)^2 B/2) \leq 1/(KT)^8
\]
By union bound, with probability $1-o(1/T)$, all non-strategic arm $i$ satisfy $|\bar{w}_i - \mu_i|\leq M/ 2$. From now on, we will assume we are in the case when $|\bar{w}_i - \mu_i|< M/ 2$, for all $i$ such that arm $i$ is a non-strategic arm. 

There are two cases:
\begin{enumerate}
\item Case 1: arm $i^*$ is a strategic arm. Then its easy to see that $w' \geq u_s+M$ and $w' \geq u_n - M/2$. And also $\mu_{i^*} = w_{i^*} - M \geq w' - M$. So only from the third step of Mechanism \ref{mech:sto-nstra-known-tacit}, the principal will get at least
\begin{eqnarray*}
&&(w' -M)R = \max(u_s, u_n - 3M/2) R \geq \max(u_s,u_n)R- 3MR/2 \\
&=& \max(u_s,u_n)T - \max(u_s,u_n)(u+3)BK -3MR/2 \\
&=&  \max(u_s,u_n)T - o(T).\\
\end{eqnarray*}
\item Case 2: arm $i^*$ is a non-strategic arm. We know that $\mu_{i^*} \geq w_{i^*} - M/2 \geq (w'-M) + M/2$. So by using Chernoff bound and union bound again, we know that arm $i^*$ will be stopped in the third with probability $o(1/T)$. We also know that $\mu_{i^*} \geq w_{i^*} - M/2 \geq u_s+M-M/2$ and $\mu_{i^*} \geq w_{i^*}-M/2 \geq u_n- M/2-M/2$. Using the same argument as Case 1, we know that only from the third step, the principal will get at least $ \max(u_s,u_n)T - o(T)$.
\end{enumerate}
\end{proof}

\paragraph*{Acknowledgements}

M.B. supported in part by an NSF CAREER award (CCF-1149888), NSF CCF-1215990, NSF CCF-1525342, NSF CCF-1412958, a Packard Fellowship in Science and Engineering, and the Simons Collaboration on Algorithms and Geometry.

Research completed in part while S.W. was a Research Fellow at the Simons Institute for the Theory of Computing.

\bibliographystyle{alpha}
\bibliography{bib}

\appendix

\section{Omitted Proofs}\label{sect:applb}

\begin{proof}[Proof of Theorem \ref{thm:alldists2arms}]
Let $D_1$ and $D_2$ be distributions with means $\mu_1$ and $\mu_2$ respectively, and both distributions supported on $[\sqrt{\delta/T}, 1]$.  We now describe the equilibrium strategy $S^*$ (the below description is for arm 1; $S^*$ for arm 2 is symmetric):
\begin{enumerate}
\item Set parameters $B = 6\sqrt{T\delta} /\mu_2$ and $\theta = \sqrt{\frac{\delta}{T}}$. 
\item Define $c_{1,t}$ to be the number times arm $1$ is pulled in rounds $1,...,t$. Similarly define $c_{2,t}$ to be the number times arm $2$ is pulled in rounds $1,...,t$.
\item For $t= 1,...,T$.
\begin{enumerate}
\item If there exists a $t' \leq t-1$ such that $c_{1,t'} /\mu_1 < c_{2,t'}/\mu_2 - B$, set $w_{1,t} = v_{1,t}$.
\item If the condition in (a) is not true, let $p_{1,t}$ be the probability that the principal will pick arm 1 in this round conditioned on the history (assuming player $2$ is also playing $S^*$), and let $p_{2,t} = 1-p_{1,t}$. Then:
\begin{enumerate}
\item If $c_{1,t-1}/\mu_1 < c_{2,t-1}/\mu_2$ and $p_{1,t}/\mu_1<p_{2,t}/\mu_2$, set $w_{1,t} = \theta $.
\item Otherwise, set $w_{1,t} = 0$.
\end{enumerate}
\end{enumerate}
\end{enumerate}

We will now show that $(S^*,S^*)$ is an $O(\sqrt{T\delta})$-Nash equilibrium. To do this, for any deviating strategy $S'$, we will both lower bound $u_1(M, S^*, S^*)$ and upper bound $u_1(M, S', S^*)$, hence bounding the net utility of deviation.

We begin by proving that $u_1(M, S^*, S^*) \geq \frac{\mu_1^2 T}{\mu_1+\mu_2} - O(\sqrt{T\delta})$. We need the following lemma.
\begin{lemma}
\label{lem:2eqapp}
If both arms are using strategy $S^*$, then with probability $\left(1-\frac{4}{T}\right)$, $|c_{1,t}/\mu_1 -c_{2,t}/\mu_2| \leq B$ for all $t\in[T]$. 
\end{lemma}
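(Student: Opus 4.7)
The plan is to adapt the proof of Lemma \ref{lem:2eq} with the weighted quantities $c_{i,t}/\mu_i$ playing the role of the plain counts. As in the original argument, first note that dropping the defection rule (step (a)) does not change the probability of the event $|c_{1,t}/\mu_1 - c_{2,t}/\mu_2|\leq B$ for all $t$, so assume WLOG that neither arm defects. Define regrets $R_{i,t}=\sum_{s=1}^{t}(w_{i,s}-w_{I_s,s})$ exactly as before.

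The key modification is identifying the correct submartingale. Suppose $c_{1,t-1}/\mu_1 > c_{2,t-1}/\mu_2$; then arm~2 offers $\theta$ precisely when $p_{1,t}/\mu_1 > p_{2,t}/\mu_2$, while arm~1 offers $0$, so $R_{2,t+1}-R_{2,t}=\theta\cdot\1_{I_t=1}$ and $R_{1,t+1}-R_{1,t}=-\theta\cdot\1_{I_t=2}$. The natural linear combination is $Y_t := R_{1,t}/\mu_2 + R_{2,t}/\mu_1$, weighted by the \emph{other} arm's mean; its conditional expected increment equals $\theta\bigl(p_{1,t}/\mu_1 - p_{2,t}/\mu_2\bigr)\geq 0$. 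A symmetric calculation covers the case $c_{2,t-1}/\mu_2 > c_{1,t-1}/\mu_1$, so $Y_t$ is a submartingale with step size at most $\theta/\min(\mu_1,\mu_2)$. Azuma's inequality then gives $Y_t \geq -O(\theta\sqrt{T\log T}/\min(\mu_1,\mu_2))$ for all $t$ with probability $1-1/T$. Combined with the $(\rho,\delta)$-low-regret guarantee $R_{i,t}\leq \delta$ for all $t$ with probability $1-1/T$, we conclude (with probability $\geq 1-2/T$) that $R_{1,t},R_{2,t}\in\bigl[-O(\delta+\theta\sqrt{T\log T}),\,\delta\bigr]$ for all $t$, where the $\mu_i$ factors are absorbed since $\mu_i\geq \sqrt{\delta/T}$ by the support assumption.

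The rest of the argument mirrors the original. For a fixed $t$, let $\tau$ be the last time at which $c_{1,\tau}/\mu_1\leq c_{2,\tau}/\mu_2$, and write
\[
\tfrac{c_{1,t}}{\mu_1}-\tfrac{c_{2,t}}{\mu_2} \;=\; \sum_{s=\tau+1}^{t}\bigl(\tfrac{\1_{I_s=1}}{\mu_1}-\tfrac{\1_{I_s=2}}{\mu_2}\bigr),
\]
decomposed by whether $p_{1,s}/\mu_1 > p_{2,s}/\mu_2$ or not. On steps where the inequality holds, arm~2 offers $\theta$ and each pull of arm~1 increases $R_{2,s}$ by $\theta$; since $R_{2,s}$ stays in an interval of length $O(\delta + \theta\sqrt{T\log T})$, the number of such pulls of arm~1 is $O(\delta/\theta + \sqrt{T\log T})$, contributing at most $O((\delta/\theta+\sqrt{T\log T})/\mu_1)$. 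On the remaining steps the conditional expectation of each summand is nonpositive and each summand has magnitude at most $1/\min(\mu_1,\mu_2)$, so Azuma--Hoeffding bounds the cumulative contribution by $O(\sqrt{T\log T}/\min(\mu_1,\mu_2))$ with probability $1-1/T^2$. Substituting $\theta=\sqrt{\delta/T}$ makes both quantities $O(\sqrt{T\delta}/\mu_2)$ after taking WLOG $\mu_2\leq \mu_1$, which fits inside $B = 6\sqrt{T\delta}/\mu_2$. A fully symmetric argument bounds $c_{2,t}/\mu_2 - c_{1,t}/\mu_1$, and union-bounding over $t$ and the two directions gives the claim with failure probability at most $4/T$.

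The main obstacle is choosing the right submartingale: the obvious candidate $R_{1,t}+R_{2,t}$ from Lemma \ref{lem:2eq} does not work because the condition triggering a $\theta$-offer is now $p_{1,t}/\mu_1<p_{2,t}/\mu_2$ rather than $p_{1,t}<p_{2,t}$. The cross-weighted combination $R_{1,t}/\mu_2+R_{2,t}/\mu_1$ is the unique (up to scale) combination whose expected increment has exactly the right sign under the new condition; once this is in place, every remaining estimate is a routine rescaling of its counterpart from the proof of Lemma \ref{lem:2eq}, with $1/\mu_2$ factors threading through to produce the constant $B$.
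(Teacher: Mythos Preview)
Your approach is essentially identical to the paper's: the same cross-weighted submartingale $R_{1,t}/\mu_2 + R_{2,t}/\mu_1$, the same decomposition of $c_{1,t}/\mu_1 - c_{2,t}/\mu_2$ starting from the last crossing time $\tau$, and the same split according to the sign of $p_{1,s}/\mu_1 - p_{2,s}/\mu_2$.

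There is one slip worth flagging. You claim $R_{i,t}\in[-O(\delta+\theta\sqrt{T\log T}),\,\delta]$, saying the $\mu_i$ factors are ``absorbed'' via the support assumption $\mu_i\ge\sqrt{\delta/T}$. This is not correct: from $Y_t\ge -O(\theta\sqrt{T\log T}/\mu_2)$ together with $R_{1,t}\le\delta$ one only obtains
\[
R_{2,t}\;\ge\;-\frac{\mu_1}{\mu_2}\,O\bigl(\delta+\theta\sqrt{T\log T}\bigr),
\]
and the ratio $\mu_1/\mu_2$ is not controlled by the support condition (it can be as large as $\sqrt{T/\delta}$). Fortunately this extra factor is exactly what the argument needs: each increment $\Delta_s$ when arm~$1$ is pulled equals $1/\mu_1$, so the contribution of the first sum is
\[
\frac{(\mu_1/\mu_2)\,O(\delta+\theta\sqrt{T\log T})}{\theta}\cdot\frac{1}{\mu_1}
\;=\;\frac{1}{\mu_2}\,O\!\Bigl(\frac{\delta}{\theta}+\sqrt{T\log T}\Bigr)
\;=\;O\!\Bigl(\frac{\sqrt{T\delta}}{\mu_2}\Bigr),
\]
which is precisely what fits inside $B$. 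Your stated intermediate bound $O\bigl((\delta/\theta+\sqrt{T\log T})/\mu_1\bigr)$ is too small by a factor of $\mu_1/\mu_2$; you land on the correct $O(\sqrt{T\delta}/\mu_2)$ only because you afterwards relax $1/\mu_1$ to $1/\mu_2$. Keep the $\mu_1/\mu_2$ factor explicit in the range of $R_{2,s}$ (as the paper does) and the arithmetic goes through cleanly.
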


\begin{proof}
Assume that both arms are playing the strategy $S^*$ with the modification that they never defect (i.e. condition (a) in the above strategy is removed). This does not change the probability that $|c_{1,t} /\mu_1- c_{2,t}/\mu_2| \leq B$ for all $t \in [T]$. 

Define $R_{1,t} = \sum_{s=1}^{t}w_{1,s} - \sum_{s=1}^{t}w_{I_s, s}$ be the regret the principal experiences for not playing only arm 1. Define $R_{2,t}$ similarly. We will begin by showing that with high probability, these regrets are bounded both above and below. In particular, we will show that with probability at least $1-\frac{2}{T}$, $R_{i,t}$ lies in $ \left[-\frac{\mu_1}{\mu_2}(2\theta\sqrt{T\log T}+\delta), \delta\right]$ for all $t \in [T]$ and $i \in \{1, 2\}$.

To do this, note that there are two cases where the regrets $R_{1,t}$ and $R_{2,t}$ can possibly change. The first is when $p_{1,t}/\mu_1>p_{2,t}/\mu_2$ and $c_{1,t}/\mu_1>c_{2,t}/\mu_2$. In this case, the arms offer $(w_{1,t}, w_{2,t}) = (0, \theta)$. With probability $p_{1,t}$ the principal chooses arm $1$ and the regrets update to $(R_{1,t+1}, R_{2,t+1}) = (R_{1,t}, R_{2,t} + \theta)$, and with probability $p_{2,t}$ the principal chooses arm $2$ and the regrets update to $(R_{1,t+1}, R_{2, t+1}) = (R_{1, t}-\theta, R_{2,t})$. It follows that $\E[R_{1,t+1}/\mu_2+R_{2,t+1}/\mu_1|R_{1,t}/\mu_2+R_{2,t}/\mu_1] = R_{1,t}/\mu_2+R_{2,t}/\mu_1 + (p_{1,t}/\mu_1-p_{2,t}/\mu_2)\theta \geq R_{1,t}/\mu_2 + R_{2,t}/\mu_1$. 

In the second case, $p_{1,t}/\mu_1<p_{2,t}/\mu_2$ and $c_{2,t}/\mu_1<c_{1,t}/\mu_2$, and a similar calculation shows again that $\E[R_{1,t+1}/\mu_2+R_{2,t+1}/\mu_1|R_{1,t}/\mu_2+R_{2,t}/\mu_1] = R_{1,t}/\mu_2+R_{2,t}/\mu_1 + (p_{2,t}/\mu_2-p_{1,t}/\mu_1)\theta \geq R_{1,t} + R_{2,t}$. It follows that $R_{1,t}/\mu_2 + R_{2,t}/\mu_1$ forms a submartingale.

From the above analysis, it is also clear that $\left|(R_{1,t+1}/\mu_2 + R_{2,t+1}/\mu_1) - (R_{1,t}/\mu_2 + R_{2,t}/\mu_1)\right| \leq \theta/\mu_2$. It follows from Azuma's inequality that, for any fixed $t \in [T]$,

$$\Prob\left[R_{1,t} /\mu_2+ R_{2,t}/\mu_1 \leq -\frac{2\theta}{\mu_2}\sqrt{T\log T}\right] \leq \frac{1}{T^2}$$

Applying the union bound, with probability at least $1-\frac{1}{T}$, $R_{1,t} /\mu_2+ R_{2,t}/\mu_1 \geq -\frac{2\theta}{\mu_2}\sqrt{T\log T}$ for all $t\in[T]$. Furthermore, since the principal is using a $\left(T^{-2}, \delta\right)$-low-regret algorithm, it is also true that with probability at least $1-T^{-2}$ (for any fixed $t$) both $R_{1,t}$ and $R_{2,t}$ are at most $\delta$. Applying the union bound again, it is true that $R_{1,t} \leq \delta$ and $R_{2,t} \leq \delta$ for all $t$ with probability at least $1-\frac{1}{T}$. Finally, combining this with the earlier inequality (and applying union bound once more), with probability at least $1 - \frac{2}{T}$, $R_{i,t} \in \left[-\frac{\mu_1}{\mu_2}(2\theta\sqrt{T\log T}+\delta), \delta\right]$, as desired. For the remainder of the proof, condition on this being true.

We next proceed to bound the probability that (for a fixed $t$) $c_{1,t}/\mu_1 - c_{2,t}/\mu_2 \leq B$. Define the random variable $\tau-1$ to be the largest value $s \leq t$ such that $c_{1,\tau}/\mu_1 - c_{2, \tau}/\mu_2 \leq 0$ -- note that if $c_{1,t}/\mu_1 - c_{2,t}/\mu_2 \geq 0$, then $c_{1,s} /\mu_1- c_{2,s}/\mu_2 \geq 0$ for all $s$ in the range $[\tau, t]$. Additionally let $\Delta_{s}$ denote the $\pm 1$ random variable given by the difference $(c_{1,s} /\mu_1- c_{2,s}/\mu_2)-(c_{1,s-1}/\mu_1 - c_{2,s-1}/\mu_2)$. We can then write

\begin{eqnarray*}
c_{1,t} /\mu_1- c_{2,t} /\mu_2&\leq & \sum_{s = \tau+1}^{t} \Delta_{s} \\
&\leq & \sum_{s=\tau+1}^{t} \Delta_{s}\cdot \1_{p_{1,s}/\mu_1>p_{2,s}/\mu_2} + \sum_{s=\tau+1}^{t} \Delta_{s}\cdot \1_{p_{1,s}/\mu_1\leq p_{2,s}/\mu_2}
\end{eqnarray*}

Here the first summand corresponds to times $s$ where one of the arms offers $\theta$ (and hence the regrets change), and the second summand corresponds to times where both arms offer $0$. Note that since $c_{1,s}/\mu_1 \geq c_{2,s}/\mu_2$ in this interval, the regret $R_{2,s}$ increases by $\theta$ whenever $\Delta_{s} = 1/\mu_1$ (i.e., arm $1$ is chosen), and furthermore no choice of arm can decrease $R_{2,s}$ in this interval. Since we know that $R_{2,s}$ lies in the interval $\left[-\frac{\mu_1}{\mu_2}(2\theta\sqrt{T\log T}+\delta), \delta\right]$ for all $s$, this bounds the first sum by

$$\sum_{s=\tau+1}^{t} \Delta_{s}\cdot \1_{p_{1,s}>p_{2,s}} \leq \frac{\delta +\frac{\mu_1}{\mu_2}(2\theta\sqrt{T\log T}+\delta)}{\theta} \cdot (1/\mu_1) =\frac{1}{\mu_2}\left( \frac{2\delta}{\theta} + 2\sqrt{T\log T}\right)$$

On the other hand, when $p_{1,s}/\mu_1 \leq p_{2,s}/\mu_2$, then $\E[\Delta_{s}] = p_{1,s}/\mu_1-p_{2,s}/\mu_2\leq 0$. By Hoeffding's inequality, it then follows that with probability at least $1 - \frac{1}{T^2}$, 

$$\sum_{s=\tau+1}^{t} \Delta_{s}\cdot \1_{p_{1,s}\leq p_{2,s}} \leq \frac{2}{\mu_2}\sqrt{T\log T}$$

\noindent
Altogether, this shows that with probability at least $1 - \frac{1}{T^2}$,

$$c_{1,t} - c_{2,t} \leq \frac{1}{\mu_2}\left(\frac{2\delta}{\theta} + 4\sqrt{T\log T}\right) \leq 6\sqrt{T\delta} /\mu_2= B$$

The above inequality therefore holds for all $t$ with probability at least $1 - \frac{1}{T}$. Likewise, we can show that $c_{2,t} /\mu_2- c_{1,t} /\mu_1\leq B$ also holds for all $t$ with probability at least $1 - \frac{1}{T}$. Since we are conditioned on the regrets $R_{i,t}$ being bounded (which is true with probability at least $\frac{2}{T}$), it follows that $|c_{1,t}/\mu_1 - c_{2,t}/\mu_2| \leq B$ for all $t$ with probability at least $1 - \frac{4}{T}$. 

\end{proof}
By Lemma \ref{lem:2eq}, we know that with probability $1-\frac{4}{T}$, $|c_{1,t}/\mu_1 -c_{2,t}/\mu_2| \leq B$ throughout the mechanism. In this case, arm 1 never uses step (a), and $c_{1,T} \geq \frac{\mu_1}{\mu_1+\mu_2}T - \frac{\mu_1\mu_2}{\mu_1+\mu_2}B$. Therefore 

\begin{eqnarray*}
u_1(M, S^*, S^*) &\geq&  \left(1-\frac{4}{T}\right) \cdot (\mu_1-\theta) \cdot \left(\frac{\mu_1}{\mu_1+\mu_2}T - \frac{\mu_1\mu_2}{\mu_1+\mu_2}B\right)\\
&\geq & \frac{\mu_1^2T}{\mu_1+\mu_2} - O(\sqrt{T\delta})
\end{eqnarray*}

Now we will show that $u_1(M,S',S^*) \leq  \frac{\mu_1^2T}{\mu_1+\mu_2}  + O(\sqrt{T\delta})$. Without loss of generality, we can assume $S'$ is deterministic. Let $M_R$ be the deterministic mechanism when $M$'s randomness is fixed to some outcome $R$. Consider the situation when arm $1$ is using strategy $S'$, arm 2 is using strategy $S^*$ and the principal is using mechanism $M_R$. There are two cases:
\begin{enumerate}
\item $c_{1,t}/\mu_1 -c_{2,t}/\mu_2 \leq B$ is true for all $t\in[T]$. In this case, we have 

$$u_1(M_R,S',S^*) \leq c_{1,T} \cdot \mu_1 \leq \frac{\mu_1}{\mu_1+\mu_2}T + \frac{\mu_1\mu_2}{\mu_1+\mu_2}B.$$

\item There exists some $t$ such that $c_{1,t}/\mu_1 -c_{2,t}/\mu_2 > B$: Let $\tau_R+1$ be the smallest $t$ such that $c_{1,t}/\mu_1 -c_{2,t}/\mu_2 > B$. We know that $c_{1,\tau_R}/\mu_1 -c_{2,\tau_R} /\mu_2\leq B$. Therefore we have
\begin{eqnarray*}
u_1(M_R,S',S^*) &=&\sum_{t=1}^T (\mu_1 - w_{1,t}) \cdot \1_{I_t = 1} \\
&=& \sum_{t=1}^T (\mu_1 - w_{2,t}) \cdot \1_{I_t = 1} + \sum_{t=1}^T (w_{2,t} - w_{1,t}) \cdot \1_{I_t = 1} \\
&\leq& c_{1,\tau_R} \mu_1 + \mu_1 + (T-\tau_R-1) \max(\mu_1-\mu_2,0) + \sum_{t=1}^T (w_{2,t} - w_{1,t}) \cdot \1_{I_t = 1} \\
&\leq& \mu_1\left( \frac{\mu_1}{\mu_1+\mu_2}\tau_R + \frac{\mu_1\mu_2}{\mu_1+\mu_2}B\right) + \mu_1 + (T-\tau_R-1) \frac{q^2_1}{\mu_1+\mu_2} \\
&& +\sum_{t=1}^T (w_{2,t} - w_{1,t}) \cdot \1_{I_t = 1} \\
&\leq&  \frac{\mu_1^2}{\mu_1+\mu_2}T + \frac{\mu_1\mu_2}{\mu_1+\mu_2}B+\mu_1+ \sum_{t=1}^T (w_{2,t} - w_{1,t}) \cdot \1_{I_t = 1}. \\
\end{eqnarray*}
\end{enumerate}
In general, we thus have that
\[
u_1(M_R,S',S^*) \leq  \frac{\mu_1^2}{\mu_1+\mu_2}T + \frac{\mu_1\mu_2}{\mu_1+\mu_2}B+\mu_1 + \max\left(0, \sum_{t=1}^T (w_{2,t} - w_{1,t}) \cdot \1_{I_t = 1}\right). \\
\]
Therefore
\begin{eqnarray*}
u_1(M,S',S^*) &=& \E_R[u_1(M_R,S',S^*)] \\
&\leq&  \frac{\mu_1^2}{\mu_1+\mu_2}T + \frac{\mu_1\mu_2}{\mu_1+\mu_2}B+\mu_1 + \E_R\left[ \max\left(0, \sum_{t=1}^T (w_{2,t} - w_{1,t}) \cdot \1_{I_t = 1}\right)\right]. \\ 
\end{eqnarray*}
Notice that $\sum_{t=1}^T (w_{2,t} - w_{1,t}) \cdot \1_{I_t = 1}$ is the regret of not playing arm 2 (i.e., $R_2$ in the proof of Lemma \ref{lem:2eq}). Since the mechanism $M$ is $(\rho, \delta)$ low regret, with probability $1-\rho$, this sum is at most $\delta$ (and in the worst case, it is bounded above by $T \mu_2$). We therefore have that:

\begin{eqnarray*}
u_1(M,S',S^*) &\leq&  \frac{\mu_1^2}{\mu_1+\mu_2}T + \frac{\mu_1\mu_2}{\mu_1+\mu_2}B+\mu_1+ \delta + \rho T \mu_2 \\
&\leq &  \frac{\mu_1^2}{\mu_1+\mu_2}T + O(\sqrt{T\delta})
\end{eqnarray*}

From this and our earlier lower bound on $u_1(M, S^*, S^*)$, it follows that $u_1(M, S',S^*) - u_1(M, S^*, S^*) \leq O(\sqrt{T\delta})$, thus establishing that $(S^*, S^*)$ is an $O(\sqrt{T\delta})$-Nash equilibrium for the arms.

Finally, to bound the revenue of the principal, note that if the arms both play according to $S^*$ and $|c_{1,t}/\mu_1 - c_{2,t}/\mu_2| \leq B$ for all $t$ (so they do not defect), the principal gets a maximum of $T\theta = O(\sqrt{T\delta})$ revenue overall. Since (by Lemma \ref{lem:2eq}) this happens with probability at least $1 - \frac{4}{T}$ (and the total amount of revenue the principal is bounded above by $T$), it follows that the total expected revenue of the principal is at most $O(\sqrt{T\delta})$. 

\end{proof}

\begin{proof}[Proof of Theorem \ref{thm:advtacit}]
As in the proof of Theorem \ref{thm:advtacit2arms}, let $\mu_i$ denote the mean value of the $i$th arm's distribution $D_i$ (supported on $[\sqrt{K\delta/T}, 1]$). Without loss of generality, further assume that $\mu_{1} \geq \mu_{2} \geq \dots \geq \mu_{K}$. We will show that as long as $\mu_{1}-\mu_{2} \leq \frac{\mu_1}{K}$, there exists some $O(\sqrt{KT\delta})$-Nash equilibrium for the arms where the principal gets at most $O(\sqrt{KT\delta})$ revenue.

We begin by describing the equilibrium strategy $S^*$ for the arms. Let $c_{i,t}$ denote the number of times arm $i$ has been pulled up to time $t$. As before, set $B = 7\sqrt{KT\delta}$ and set $\theta = \sqrt{\frac{K\delta}{T}}$. The equilibrium strategy for arm $i$ at time $t$ is as follows:

\begin{enumerate}
\item
If at any time $s\leq t$ in the past, there exists an arm $j$ with $c_{j, s} - c_{i, s} \geq B$, defect and offer your full value $w_{i,t} = \mu_{i}$.
\item
Compute the probability $p_{i,t}$, the probability that the principal will pull arm $i$ conditioned on the history so far. 
\item
Offer $w_{i,t} = \theta(1-p_{i,t})$. 
\end{enumerate} 

We begin, as before, by showing that if all parties follow this strategy, then with high probability no one will ever defect. 

\begin{lemma}
\label{lem:keq}
If all arms are using strategy $S^*$, then with probability $\left(1-\frac{3}{T}\right)$, $|c_{i,t} -c_{j,t}| \leq B$ for all $t\in[T], i,j\in[K]$. 
\end{lemma}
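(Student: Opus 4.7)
}
My plan is to generalize the argument of Lemma \ref{lem:2eq} by replacing the two-arm case analysis with a uniform control on the per-arm regrets. As in the two-arm case, since the event in question depends only on the pull counts, I may analyze the modified strategy in which arms never invoke the defection clause (so $w_{i,t} = \theta(1-p_{i,t})$ at every round). Define $R_{i,t} = \sum_{s=1}^{t} w_{i,s} - \sum_{s=1}^{t} w_{I_s, s}$ as in the two-arm proof. Because the offer $w_{i,s} = \theta(1-p_{i,s})$ is predictable with respect to the history before round $s$, one has the clean identity
\[
R_{j,t} - R_{i,t} \;=\; \theta \sum_{s=1}^{t} (p_{i,s} - p_{j,s}).
\]
Moreover, $c_{i,t} - \sum_{s=1}^{t} p_{i,s}$ is a martingale with increments in $\{-1,0,1\}$, so Azuma's inequality gives $|c_{i,t} - c_{j,t} - \theta^{-1}(R_{j,t} - R_{i,t})| = O(\sqrt{T \log T})$ uniformly in $i, j, t$ with probability $1 - o(1/T)$. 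Thus the task reduces to bounding $|R_{j,t} - R_{i,t}|/\theta$ by roughly $B = 7\sqrt{KT\delta}$ for every pair.

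For the upper bound on individual regrets, the $(\rho,\delta)$-low-regret hypothesis with $\rho \le T^{-2}$ yields $\max_i R_{i,t} \le \delta$ for all $t$ simultaneously, with probability $\ge 1 - T^{-2}$. For the lower bound, I use that $\sum_i R_{i,t}$ is a submartingale: a direct calculation shows that the increment $\sum_i R_{i,t+1} - \sum_i R_{i,t}$ equals $\theta(K p_{I_{t+1},t+1} - 1)$, whose conditional expectation is $\theta(K \sum_k p_{k,t+1}^2 - 1) \geq 0$ by Cauchy-Schwarz (since $\sum_k p_{k,t+1} = 1$). The martingale part has increments bounded by $K\theta$ (the increment lies in $[-\theta,(K-1)\theta]$), so Azuma gives $\sum_i R_{i,t} \ge -O(K\theta \sqrt{T\log T})$ for all $t$ with probability $\ge 1 - 1/T$. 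Combining with the upper bounds $R_{k,t} \le \delta$ for each $k \ne i$ yields $R_{i,t} \ge -(K-1)\delta - O(K\theta \sqrt{T\log T})$, and hence $|R_{j,t} - R_{i,t}| \le K\delta + O(K\theta\sqrt{T\log T})$ for every pair.

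The main obstacle---and the reason for the parameter restriction $K \le T^{1/3}/\log T$---is verifying that after dividing by $\theta = \sqrt{K\delta/T}$, the bound on $|c_{i,t} - c_{j,t}|$ stays comfortably below $B = 7\sqrt{KT\delta}$. The first term gives $K\delta/\theta = \sqrt{KT\delta}$ exactly. For the second term to be absorbed, one needs $K\sqrt{T\log T} = O(\sqrt{KT\delta})$, equivalently $\delta = \Omega(K\log T)$; this holds since $K\log T \le T^{1/3} \le \sqrt{T\log T} \le \delta$ under the hypotheses. The residual Azuma slack $O(\sqrt{T\log T})$ is similarly dominated. A union bound over the $O(K^2 T)$ Azuma events (each of failure probability $\le T^{-3}$, giving total at most $O(K^2/T^2) = o(1/T)$), combined with the $\le T^{-2}$ low-regret failure and the $\le 1/T$ submartingale failure, yields total failure probability $\le 3/T$ as claimed.
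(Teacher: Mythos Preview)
Your proposal is correct and follows essentially the same approach as the paper: both bound each $R_{i,t}$ from above via the low-regret hypothesis and from below via the submartingale $\sum_i R_{i,t}$ (using Cauchy--Schwarz on $\sum_k p_{k,t}^2$), and both link $c_{i,t}-c_{j,t}$ to $(R_{i,t}-R_{j,t})/\theta$ through a bounded-increment martingale---your identity $R_{j,t}-R_{i,t}=\theta\sum_s(p_{i,s}-p_{j,s})$ is precisely what makes the paper's $S^{(i,j)}_t=c_{i,t}-c_{j,t}+\theta^{-1}(R_{i,t}-R_{j,t})$ a martingale. One small bookkeeping slip: the $(\rho,\delta)$ guarantee gives $\max_i R_{i,t}\le\delta$ only for each fixed $t$, so after a union bound over $t\in[T]$ the failure probability is $T\cdot T^{-2}=1/T$ rather than $T^{-2}$ (and the increments of $c_{i,t}-\sum_s p_{i,s}$ lie in $[-1,1]$, not $\{-1,0,1\}$), but neither affects the final $3/T$ bound.
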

\begin{proof}
As before, assume that all arms are playing the strategy $S^*$ with the modification that they never defect. This does not change the probability that $|c_{i,t} - c_{j,t}| \leq B$ for all $t\in[T], i,j\in[K]$. 

Define $R_{i,t} = \sum_{s=1}^{t}w_{i,s} - \sum_{s=1}^{t}w_{I_s, s}$ be the regret the principal experiences for not playing only arm $i$ up until time $t$. We begin by showing that with probability at least $1-\frac{2}{T}$, $R_{i,t}$ lies in $[-K\theta\sqrt{T\log T} - (K-1)\delta, \delta]$ for all $t \in [T]$ and $i \in[K]$.

To do this, first note that since the principal is using a $(T^{-2}, \delta)$-low-regret algorithm, with probability at least $1-T^{-2}$ the regrets $R_{i,t}$ are all upper bounded by $\delta$ at any fixed time $t$. Via the union bound, it follows that $R_{i,t} \leq \delta$ for all $i$ and $t$ with probability at least $1-\frac{1}{T}$.

To lower bound $R_{i,t}$, we will first show that $\sum_{i=1}^{K}R_{i,t}$ is a submartingale in $t$. Note that, with probability $p_{j,t}$, $R_{i, t+1}$ will equal $R_{i, t} + \theta((1-p_{j,t})-(1-p_{i,t}))$. We then have

\begin{eqnarray*}
\E\left[\sum_{i=1}^{K}R_{i,t+1}\middle|\sum_{i=1}^{K}R_{i, t}\right] &=& 
\sum_{i=1}^{K} R_{i,t} + \sum_{i=1}^{K}p_{i,t}\sum_{j=1}^{K}\theta((1-p_{j,t})-(1-p_{i,t})) \\
&=& \sum_{i=1}^{K} R_{i,t} + \sum_{i=1}^{K}p_{i,t}\sum_{j=1}^{K}\theta(p_{i,t} - p_{j,t}) \\
&=& \sum_{i=1}^{K} R_{i,t} + \theta\sum_{i=1}^{K}p_{i,t}(Kp_{i,t} - 1) \\
&=& \sum_{i=1}^{K} R_{i,t} + \theta\left(K\sum_{i=1}^{K}p_{i,t}^2 - \sum_{i=1}^{K}p_{i,t}\right) \\
&\geq& \sum_{i=1}^{K} R_{i,t}
\end{eqnarray*}

\noindent
where the last inequality follows by Cauchy-Schwartz. It follows that $\sum_{i=1}^{K}R_{i,t}$ forms a submartingale.

\sloppy
Moreover, note that (since $|p_{i}-p_{j}| \leq 1$) $|R_{i,t+1} - R_{i,t}| \leq \theta$. It follows that $\left|\sum_{i=1}^{K}R_{i,t+1} - \sum_{i=1}^{K}R_{i,t}\right| \leq K\theta$ and therefore by Azuma's inequality that, for any fixed $t \in [T]$,

$$\Pr\left[\sum_{i=1}^{K}R_{i,t} \leq -2K\theta\sqrt{T\log T}\right] \leq \frac{1}{T^2}.$$

With probability $1-\frac{1}{T}$, this holds for all $t \in [T]$. Since (with probability $1-\frac{1}{T}$) $R_{i, t} \leq \delta$, this implies that with probability $1-\frac{2}{T}$, $R_{i,t} \in \left[-2K\theta\sqrt{T\log T} - (K-1)\delta, \delta\right]$. 

We next proceed to bound the probability that $c_{i,t} - c_{j,t} > B$ for a $i$, $j$, and $t$. Define

$$S^{(i,j)}_{t} = \left(c_{i,t} - c_{j,t} + \frac{1}{\theta}(R_{i,t} - R_{j,t})\right).$$

We claim that $S^{(i,j)}_{t}$ is a martingale. To see this, we first claim that $R_{i,t+1} - R_{j,t+1} = R_{i, t} - R_{j,t} - \theta(p_{i,t} - p_{j,t})$. Note that, if arm $k$ is pulled, then $R_{i,t+1} = R_{i,t} + \theta((1-p_{i,t}) - (1-p_{k,t})) = R_{i,t} + \theta(p_{k,t} - p_{i,t})$ and similarly, $R_{j,t+1} = R_{j,t} + \theta(p_{k,t} - p_{j,t})$. It follows that $R_{i,t+1} - R_{j,t+1} = R_{i, t} - R_{j,t} - \theta(p_{i,t} - p_{j,t})$.

Secondly, note that (for any arm $k$) $\E[c_{k,t+1}-c_{k,t} | p_{t}] = p_{k,t}$, and thus $\E[c_{i,t+1} - c_{j, t+1} - (c_{i,t} - c_{j,t})| p_{t}] = p_{i,t} - p_{j,t}$. It follows that

\begin{eqnarray*}
\E[S^{(i,j)}_{t+1}-S^{(i,j)}_{t}|p_{t}] &=& \E[(c_{i,t+1} - c_{j,t+1})-(c_{i,t} -c_{j,t}) | p_{t}] \\
&&+ \frac{1}{\theta}\E[(R_{i,t+1}-R_{j,t+1})-(R_{i,t}-R_{j,t}) | p_{t}] \\
&=& (p_{i,t} - p_{j,t}) - (p_{i,t} - p_{j,t}) \\
&=& 0
\end{eqnarray*}

and thus that $\E[S^{(i,j)}_{t+1}|S^{(i,j)}_{t}] = S^{(i,j)}_{t}$, and thus that $S^{(i,j)}_{t}$ is a martingale. Finally, note that $|S^{(i,j)}_{t+1} - S^{(i,j)}_{t}| \leq 2$, so by Azuma's inequality

$$\mathrm{Pr}\left[S^{(i,j)}_{t} \geq 4\sqrt{T\log(TK)}\right] \leq (TK)^{-2}$$

\sloppy
Taking the union bound, we find that with probability at least $1- \frac{1}{T}$, $S^{(i,j)} \leq 4\sqrt{T\log(TK)}$ for all $i$, $j$, and $t$. Finally, since with probability at least $1-\frac{2}{T}$ each $R_{i,t}$ lies in $\left[-2K\theta\sqrt{T\log T} - (K-1)\delta, \delta\right]$, with probability at least $1-\frac{3}{T}$ we have that (for all $i$, $j$, and $t$)

\begin{eqnarray*}
c_{i,t} - c_{j,t} &=& S^{(i,j)}_{t} - \frac{1}{\theta}(R_{i,t} - R_{j,t}) \\
&\leq& 4\sqrt{T\log(TK)} + \frac{1}{\theta}\left|R_{i,t} - R_{j,t}\right| \\
&\leq & 4\sqrt{T\log(TK)} + 2K\sqrt{T\log T} + \frac{K\delta}{\theta} \\
&\leq & \frac{7K\delta}{\theta} \\
&=& 7K\sqrt{T\delta} \\
&=& B
\end{eqnarray*}
\end{proof}

By Lemma \ref{lem:keq}, we know that with probability $1-\frac{3}{T}$, $|c_{i,t} -c_{j,t}| \leq B$ for all $t\in[T], i,j\in[K]$. In this case, arm 1 never defect, and $c_{1,T} \geq T/K-B$. Therefore 

\begin{eqnarray*}
u_1(M, S^*, S^*) &\geq&  \left(1-\frac{3}{T}\right) \cdot (\mu_1-\theta) \cdot (T/K-B) \\
&\geq & \frac{\mu_1T}{K}\left(1 - \frac{3}{T} - \frac{\theta}{\mu_1} - \frac{BK}{T}\right) \\
&=& \frac{\mu_1T}{K} - 3\mu_1/K - \frac{\theta T}{K} - B\mu_1 \\
&\geq & \frac{\mu_1T}{K} - O(\sqrt{KT\delta})
\end{eqnarray*}

Now we are going to show that $u_1(M,S',S^*) \leq \frac{\mu_1T}{K} + O(\sqrt{KT\delta})$. Without loss of generality, we can assume $S'$ is deterministic. Let $M_R$ be the deterministic mechanism when $M$'s randomness is fixed to some outcome $R$. Consider the situation when arm $1$ is using strategy $S'$, arm 2 is using strategy $S^*$ and the principal is using mechanism $M_R$. There are two cases:
\begin{enumerate}
\item $c_{i,t} -c_{j,t} \leq B$ is true for all $t\in[T]$ and $i,j \in [K]$. In this case, we have 
\[
u_1(M_R,S',S^*) \leq c_{1,T} \cdot \mu_1 \leq \mu_1(T+(K-1)B)/K.
\]
\item There exists some $t\in[T]$ and $i,j\in[K]$ such that $c_{i,t} -c_{j,t} > B$: Let $\tau_R+1$ be the smallest $t$ such that $c_{i,t} -c_{j,t} > B$ for some $i,j\in[K]$. We know that $c_{1,\tau_R} -c_{i,\tau_R} \leq B$ for all $i \in [K]$. Therefore we have
\begin{eqnarray*}
u_1(M_R,S',S^*) &=&\sum_{t=1}^T (\mu_1 - w_{1,t}) \cdot \1_{I_t = 1} \\
&=& \sum_{t=1}^T (\mu_1 - w_{2,t}) \cdot \1_{I_t = 1} + \sum_{t=1}^T (w_{2,t} - w_{1,t}) \cdot \1_{I_t = 1} \\
&\leq& c_{1,\tau_R} \mu_1 + \mu_1 + (T-\tau_R-1) \max(\mu_1-\mu_2,0) + \sum_{t=1}^T (w_{2,t} - w_{1,t}) \cdot \1_{I_t = 1} \\
&\leq& \mu_1(\tau_R+B)/K + \mu_1 + (T-\tau_R-1) (\mu_1/K) +\sum_{t=1}^T (w_{2,t} - w_{1,t}) \cdot \1_{I_t = 1} \\
&\leq& \mu_1T/K + \mu_1(B+1)(K-1)/K + \sum_{t=1}^T (w_{2,t} - w_{1,t}) \cdot \1_{I_t = 1}. \\
\end{eqnarray*}
In $M_R$, we also have
\begin{eqnarray*}
 \sum_{t=1}^T (w_{2,t} - w_{1,t}) \cdot \1_{I_t = 1} &=&  \sum_{t=1}^T (w_{2,t} - w_{I_t,t} ) - \sum_{t=1}^T (w_{2,t} - w_{I_t,t} )\cdot \1_{I_t \neq 1} \\
 &\leq& \sum_{t=1}^T (w_{2,t} - w_{I_t,t} )  + \sum_{t=1}^{\tau_R} w_{I_t,t} \cdot \1_{I_t \neq 1} - \sum_{t=\tau_R+1}^T (\mu_2-\mu_{I_t}) \cdot \1_{I_t \neq 1} \\
 &\leq& \sum_{t=1}^T (w_{2,t} - w_{I_t,t} ) + T(\theta+B/T) + 0.
 \end{eqnarray*}
\end{enumerate}
In general, we thus have that
\[
u_1(M_R,S',S^*) \leq \mu_1T/K + \mu_1(B+1)(K-1)/K  + \max\left(0, \sum_{t=1}^T (w_{2,t} - w_{I_t,t} ) + T\theta+B\right). \\
\]
Therefore
\begin{eqnarray*}
u_1(M,S',S^*) &=& \E_R[u_1(M_R,S',S^*)] \\
&\leq& \mu_1T/K + \mu_1(B+1)(K-1)/K + \E_R\left[ \max\left(0, \sum_{t=1}^T (w_{2,t} - w_{I_t,t} ) + T\theta+B\right)\right]. \\ 
\end{eqnarray*}
Notice that $ \sum_{t=1}^T (w_{2,t} - w_{I_t,t} )$ is the regret of not playing arm 2. Since the mechanism $M$ is $(\rho, \delta)$ low regret, with probability $1-\rho$, this sum is at most $\delta$ (and in the worst case, it is bounded above by $T \mu_2$). We therefore have that:

\begin{eqnarray*}
u_1(M,S',S^*) &\leq& \mu_1T/K + \mu_1(B+1)(K-1)/K  + \delta + \rho T \mu_ + T\theta + B \\
&\leq & \frac{\mu_1T}{K} + O(\sqrt{KT\delta}).
\end{eqnarray*}

From this and our earlier lower bound on $u_1(M, S^*, S^*)$, it follows that $u_1(M, S',S^*) - u_1(M, S^*, S^*) \leq O(\sqrt{KT\delta})$, thus establishing that $(S^*, S^*)$ is an $O(\sqrt{KT\delta})$-Nash equilibrium for the arms.

Finally, to bound the revenue of the principal, note that if the arms both play according to $S^*$ and $|c_{i,t} - c_{j,t}| \leq B$ for all $t\in[T],i,j\in[K]$ (so they do not defect), the principal gets a maximum of $T\theta = O(\sqrt{KT\delta})$ revenue overall. Since (by Lemma \ref{lem:2eq}) this happens with probability at least $1 - \frac{3}{T}$ (and the total amount of revenue the principal is bounded above by $T$), it follows that the total expected revenue of the principal is at most $O(\sqrt{KT\delta})$. 
\end{proof}

\end{document}